\newtheorem{theorem}{Theorem}
\newtheorem{lemma}{Lemma}
\newtheorem{remark}{Remark}
\newtheorem{definition}{Definition}
\shorttitle{Admissibility}
\title{Characterization of admissible quasisymmetries}
\author{J. W. Burby\aff{1}\corresp{\email{joshua.burby@austin.utexas.edu}}, N. Kallinikos\aff{2}, R. S. MacKay\aff{2}, D. Perrella\aff{3}, and D. Pfefferl\'e\aff{3}}
\affiliation{\aff{1}Department of Physics and Institute for Fusion Studies, The University of Texas at Austin, Austin, TX 78712, USA
\aff{2}Mathematics Institute, University of Warwick, Coventry CV4 7AL, UK
\aff{3} The University of Western Australia, 35 Stirling Highway, Crawley, WA 6009, Australia}
\begin{document}

\maketitle

\begin{abstract}
We solve ``half" the problem of finding three-dimensional quasisymmetric magnetic fields that do not necessarily satisfy magnetohydrostatic force balance. This involves determining which hidden symmetries are admissible as quasisymmetries, and then showing explicitly how to construct quasisymmetric magnetic fields given an admissible symmetry. The admissibility conditions take the form of a system of overdetermined nonlinear partial differential equations involving second derivatives of the symmetry's infinitesimal generator. 
\end{abstract}

\keywords{hidden symmetry}

%
%
%

\section{Introduction}
Quasisymmetry was introduced in \cite{Boozer_QS_1983} as a condition on magnetic fields satisfying force balance, $(\nabla\times\bm{B})\times\bm{B} = \nabla p $, that ensures guiding center trajectories enjoy a constant of motion analogous to canonical angular momentum. Subsequently, \cite{Burby_phase_2013}, and then \cite{BKM_2020}, showed how to remove force balance from the definition of quasisymmetry, leading to
\begin{definition}
Let $Q\subset \mathbb{R}^3$ be a bounded spatial domain. A nowhere-vanishing vector field $\bm{B}:Q\rightarrow\mathbb{R}^3$ is said to be  \textbf{quasisymmetric} when there is a non-zero divergence-free vector field $\bm{u}$ on $Q$ such that
\begin{align}
\nabla\times(\bm{u}\times\bm{B}) = 0,\quad (\nabla\times\bm{B})\times \bm{u} + \nabla(\bm{u}\cdot\bm{B}) = 0,\quad \nabla\cdot \bm{B} = 0.\label{QS}
\end{align}
The field $\bm{u}$ is known as the \textbf{infinitesimal generator} of the quasisymmetry.
\end{definition}
\noindent This notion of quasisymmetry is readily shown to be equivalent to existence of a $1$-parameter family of spatial symmetries for a standard truncation (see \cite{Littlejohn_1981}, \cite{Littlejohn_1982}, \cite{Littlejohn_1983}, \cite{Littlejohn_1984}) of Littlejohn's guiding center Lagrangian. It should be compared with the related notion of weak quasisymmetry, where the second equation in \eqref{QS} is replaced with $\bm{u}\cdot\nabla |\bm{B}|^2 = 0$, introduced in \cite{Rodriguez_2020} and placed in a more general context in \cite{BKM_2021}; weak quasisymmetry is the weakest condition on $\bm{B}$ that ensures guiding centers enjoy an adiabatic invariant, associated with an approximate spatial symmetry, analogous to canonical angular momentum, while quasisymmetry is the weakest condition that guarantees a particular truncation of the guiding center equations is integrable by way of a spatial symmetry. (Velocity-dependent symmetries are discussed in \cite{BKM_2021}.) Removing the force balance constraint is of practical concern because, in light of Grad's conjecture (see \cite{Grad_conj_1967,Grad_1985,Constantin_Drivas_Ginsberg_grad_2021}), the equation $(\nabla\times\bm{B})\times\bm{B} = \nabla p$ is likely not a correct description of real stellarator equilibria at small enough scales. 

Many prior discussions of quasisymmetry attempt to address existence of smooth quasisymmetric $\bm{B}$. For instance, see \cite{Garren_Boozer_1991}, \cite{Landreman_Sengupta_2019}, \cite{BKM_2020}, \cite{Constantin_Drivas_Ginsberg_2021}, \cite{Landreman_Paul_2022}, \cite{Wechsung_2022}, \cite{Sato_2022}. However, as the above definition plainly demonstrates, the magnetic field $\bm{B}$ and the infinitesimal generator $\bm{u}$, i.e. the symmetry itself, play almost symmetric roles in the theory. Briefly thinking along these lines leads to the notion dual to quasisymmetry,
\begin{definition}
A divergence-free vector field $\bm{u}$ is \textbf{admissible} when there is some   quasisymmetric $\bm{B}$ with $\bm{u}$ as its infinitesimal generator.
\end{definition}
\noindent This Article initiates the study of admissible $\bm{u}$.

Characterization of admissible infinitesimal generators $\bm{u}$ plays at least two important roles in the theory of quasisymmetric magnetic fields. The first role concerns the usual existence problem. The process of finding quasisymmetric $\bm{B}$ may be split into two subprocesses: (1) find an admissible $\bm{u}$, then (2) find a quasisymmetric $\bm{B}$ with that $\bm{u}$ as its infinitesimal generator. Of course this decomposition will only be useful if there is some way to find admissible $\bm{u}$ without finding $\bm{u}$ and $\bm{B}$ simultaneously. Determining whether this can be done requires characterizing admissible $\bm{u}$. The second role concerns an important no-go result due to Garren-Boozer in \cite{Garren_Boozer_1991}. Garren-Boozer argued quasisymmetric $\bm{B}$ in toroidal $Q$ that satisfy force balance cannot exist, except when the $\bm{u}$-flow is a $1$-parameter family of Euclidean isometries. Their result has not been shown to apply when the force balance constraint is removed. In fact, three-dimensional \emph{weakly} quasisymmetric fields out of force balance are derived in \cite{Sato_2022}.  It is therefore very interesting to determine whether quasisymmetry must always correspond to rigid Euclidean motions, or if there are genuine three-dimensional examples of quasisymmetric magnetic fields. Characterization of admissible $\bm{u}$ addresses this problem directly.

An infinitesimal generator $\bm{u}$ on $Q$ is \textbf{Killing} when its flow is a $1$-parameter family of Euclidean isometries. Alternatively, the Killing condition can be characterized infinitesimally.
\begin{lemma}
A vector field $\bm{u}$ is Killing if and only if the \textbf{strain-rate tensor} $\bm{S} = (\nabla\bm{u}+\nabla\bm{u}^T)/2$ vanishes.
\end{lemma}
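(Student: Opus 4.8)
The plan is to prove both implications by exploiting the elementary identity that, in Cartesian coordinates on $\mathbb{R}^3$, the quantities $2S_{ij}=\partial_i u_j+\partial_j u_i$ are precisely the components of the Lie derivative $\mathcal{L}_{\bm u}\bm g$ of the Euclidean metric $\bm g=\sum_i \mathrm{d}x^i\otimes\mathrm{d}x^i$ along $\bm u$. Thus $\bm S=0$ is equivalent to $\mathcal{L}_{\bm u}\bm g=0$, and the lemma reduces to relating this vanishing to rigidity of the flow $\phi_t$ of $\bm u$.

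For the ``only if'' direction I would argue as follows. If each $\phi_t$ is the restriction to $Q$ of a Euclidean isometry, then $\phi_t(x)=R(t)x+b(t)$ with $R(t)$ orthogonal, and the flow property forces $R(0)=I$, $b(0)=0$. Differentiating $R(t)^{\mathsf T}R(t)=I$ at $t=0$ shows $\dot R(0)$ is antisymmetric, while differentiating $\phi_t$ at $t=0$ gives $\bm u(x)=\dot R(0)\,x+\dot b(0)$. Hence $\nabla\bm u$ is antisymmetric and $\bm S=0$. (Invariantly: isometries preserve $\bm g$, so $\phi_t^*\bm g=\bm g$, and differentiating at $t=0$ yields $\mathcal{L}_{\bm u}\bm g=0$.)

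For the ``if'' direction, suppose $\bm S=0$, i.e. $\partial_i u_j+\partial_j u_i=0$. The one non-routine step is to promote this pointwise first-order condition to the conclusion that $\bm u$ is an infinitesimal rigid motion. I would do this with the standard cyclic-permutation trick: differentiating the relation in $x_k$ and taking the alternating sum over cyclic permutations of $(i,j,k)$, equality of mixed partials collapses the result to $\partial_j\partial_k u_i=0$ for all $i,j,k$; since $Q$ is connected this gives $\bm u(x)=Ax+c$ with $c$ constant and, re-inserting into $\bm S=0$, with $A$ antisymmetric. The flow is then $\phi_t(x)=\exp(tA)\,x+\int_0^t\exp(sA)\,c\,\mathrm{d}s$, and antisymmetry of $A$ makes $\exp(tA)$ a rotation, so each $\phi_t$ is affine with orthogonal linear part, i.e. a Euclidean isometry.

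The main obstacle is really just this rigidity point — that the pointwise ``no strain'' condition integrates to a genuine rigid motion rather than to something only infinitesimally Euclidean — and the cyclic-permutation identity is the quickest route to it; everything else is direct differentiation. One minor caveat to record is that on a bounded $Q$ the flow need not be complete, so ``$1$-parameter family of Euclidean isometries'' must be read as: for each admissible $t$, the (possibly partially defined) map $\phi_t$ agrees with the restriction of an element of the Euclidean group, which the construction above supplies.
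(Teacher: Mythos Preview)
Your proof is correct, but it follows a genuinely different route from the paper's. The paper argues directly with the pullback of the metric under the flow: writing $\ell_t^2 = (\bm v\cdot\nabla\Phi_t)\cdot(\bm w\cdot\nabla\Phi_t)$ and differentiating in $t$ gives $\tfrac{d}{dt}\ell_t^2 = 2[\bm v\cdot\nabla\Phi_t]\cdot(\bm S\circ\Phi_t)\cdot[\bm w\cdot\nabla\Phi_t]$, so $\bm S=0$ is equivalent to constancy of all such inner products along the flow, i.e.\ each $\Phi_t$ is a (Riemannian) isometry. This is a first-order ODE argument that works verbatim for any Riemannian metric, which matters since the paper later remarks that the whole framework generalizes to non-flat $g$. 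Your argument instead \emph{classifies} the solutions of $\bm S=0$ in flat space: the cyclic-permutation identity forces $\partial_j\partial_k u_i=0$, so $\bm u(x)=Ax+c$ with $A$ antisymmetric, and then the flow is explicitly $\exp(tA)$ plus a translation. This buys you more---you recover the concrete form $\bm u(\bm x)=\bm U+\bm L\times\bm x$ that the paper states separately just after the lemma---but it is specific to the flat Euclidean setting and does not survive passage to a general metric. Both arguments leave the same small point implicit (that a map preserving the flat metric on a connected open set of $\mathbb{R}^3$ extends to a rigid motion), and your caveat about the flow being only partially defined on a bounded $Q$ is well taken and applies equally to the paper's version.
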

\begin{proof}
Let $\Phi_t:\bm{x}\mapsto \bm{x}_t$ denote the time-$t$ flow for $\bm{u}$, so that $\partial_t\Phi_t = \bm{u}\circ\Phi_t$. Let $\bm{v},\bm{w}\in\mathbb{R}^3$ be a pair of vectors. The time derivative of the dot product $\ell_t^2  = (\bm{v}\cdot\nabla\Phi_t)\cdot(\bm{w}\cdot\nabla\Phi_t)$ is 
\begin{align*}
\frac{d}{dt}\ell_t^2& = ([\bm{v}\cdot\nabla\Phi_t]\cdot \nabla\bm{u}\circ\Phi_t)\cdot (\bm{w}\cdot\nabla\Phi_t) + (\bm{v}\cdot\nabla\Phi_t)\cdot ([\bm{w}\cdot\nabla\Phi_t]\cdot\nabla\bm{u}\circ\Phi_t) \\
& = 2[\bm{v}\cdot\nabla\Phi_t]\cdot (\bm{S}\circ\Phi_t)\cdot [\bm{w}\cdot\nabla\Phi_t].
\end{align*}
If $\Phi_t$ is an isometry for each $t$ then $\ell_t^2 = \ell_0^2$ for all $t$. In particular, $0=d\ell_t^2/dt\mid_{t=0}=2\bm{v}\cdot\bm{S}\cdot\bm{w}$, since $\Phi_0 = 1$. Since $\bm{v},\bm{w}$ are arbitrary, this implies $\bm{S}=0$. Conversely, if $\bm{S}=0$ then $d\ell_t^2/dt =0$, which implies $\ell_t^2 = \ell_0^2$. Since $\bm{v},\bm{w}$ are arbitrary, this implies that $\Phi_t$ is an isometry for each $t$.
\end{proof}
\noindent Every Killing $\bm{u}$ is admissible, as the following argument shows. 
\begin{lemma}\label{killing_case}
Suppose $\bm{u}$ is a nowhere-vanishing vector field on $Q$ with vanishing strain-rate tensor, $\bm{S} = 0$. Then all vector fields $\bm{B}$ of the form 
\begin{align}
\bm{B} = \frac{1}{|\bm{u}|^2}\bigg(\bm{u}\times\nabla\psi + C\,\bm{u}\bigg),\label{B_stream_function_rep}
\end{align}
with $\bm{u}$-invariant functions $\psi,C$ (i.e. $\bm{u}\cdot\nabla \psi = \bm{u}\cdot \nabla C = 0$), are quasisymmetric with infinitesimal generator $\bm{u}$.
\end{lemma}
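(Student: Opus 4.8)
The plan is to verify the three equations of \eqref{QS} directly for the given $\bm{B}$, drawing throughout on two elementary consequences of $\bm{S}=0$. Taking the trace of $\bm{S}$ gives $\nabla\cdot\bm{u}=0$, so $\bm{u}$ is divergence-free as the definition requires. Antisymmetry of $\nabla\bm{u}$ gives the pointwise identity $\bm{a}\times(\nabla\times\bm{u})=-2\,\bm{a}\cdot\nabla\bm{u}$ for every vector field $\bm{a}$; taking $\bm{a}=\bm{u}$ here and combining with the standard identity $\tfrac{1}{2}\nabla|\bm{u}|^2=\bm{u}\cdot\nabla\bm{u}+\bm{u}\times(\nabla\times\bm{u})$ yields $\bm{u}\cdot\nabla\bm{u}=-\tfrac{1}{2}\nabla|\bm{u}|^2$ and hence $\bm{u}\cdot\nabla|\bm{u}|^2=0$. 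I would record these at the outset, together with the facts $\bm{u}\cdot\bm{B}=C$ and $\bm{u}\times\bm{B}=-\nabla\psi$, the latter following from a BAC--CAB expansion using $\bm{u}\cdot\nabla\psi=0$. Because $\bm{u}$ is nowhere vanishing, $1/|\bm{u}|^2$ is smooth and all these manipulations are legitimate.

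The first equation of \eqref{QS} is then immediate: $\nabla\times(\bm{u}\times\bm{B})=\nabla\times(-\nabla\psi)=0$, a step that in fact only uses $\bm{u}\cdot\nabla\psi=0$. For the third equation I would take the curl of $\bm{u}\times\bm{B}=-\nabla\psi$ and expand the left side with $\nabla\cdot\bm{u}=0$, obtaining $\bm{u}\,(\nabla\cdot\bm{B})=\bm{u}\cdot\nabla\bm{B}-\bm{B}\cdot\nabla\bm{u}=[\bm{u},\bm{B}]$. Dotting with $\bm{u}$ turns this into $|\bm{u}|^2\,\nabla\cdot\bm{B}=\bm{u}\cdot(\bm{u}\cdot\nabla\bm{B})-\bm{u}\cdot(\bm{B}\cdot\nabla\bm{u})$, and I would evaluate the two terms: $\bm{u}\cdot(\bm{B}\cdot\nabla\bm{u})=\tfrac{1}{2}\bm{B}\cdot\nabla|\bm{u}|^2$ directly, while $\bm{u}\cdot(\bm{u}\cdot\nabla\bm{B})=-\bm{B}\cdot(\bm{u}\cdot\nabla\bm{u})=\tfrac{1}{2}\bm{B}\cdot\nabla|\bm{u}|^2$ using $\bm{u}\cdot\nabla(\bm{u}\cdot\bm{B})=\bm{u}\cdot\nabla C=0$. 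These cancel, so $\nabla\cdot\bm{B}=0$; feeding this back into $\bm{u}\,(\nabla\cdot\bm{B})=[\bm{u},\bm{B}]$ shows also that $[\bm{u},\bm{B}]=0$.

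The second equation is the crux. I would start from $\nabla(\bm{u}\cdot\bm{B})=\bm{u}\cdot\nabla\bm{B}+\bm{B}\cdot\nabla\bm{u}+\bm{u}\times(\nabla\times\bm{B})+\bm{B}\times(\nabla\times\bm{u})$, so that $(\nabla\times\bm{B})\times\bm{u}+\nabla(\bm{u}\cdot\bm{B})=\bm{u}\cdot\nabla\bm{B}+\bm{B}\cdot\nabla\bm{u}+\bm{B}\times(\nabla\times\bm{u})$. Replacing $\bm{u}\cdot\nabla\bm{B}$ by $\bm{B}\cdot\nabla\bm{u}$ via $[\bm{u},\bm{B}]=0$, and then using $\bm{B}\times(\nabla\times\bm{u})=-2\,\bm{B}\cdot\nabla\bm{u}$, the right-hand side collapses to $2\,\bm{B}\cdot\nabla\bm{u}-2\,\bm{B}\cdot\nabla\bm{u}=0$. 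This is the only place where $\bm{S}=0$ is genuinely needed beyond $\nabla\cdot\bm{u}=0$, and it is also where I expect the real difficulty to lie: one has to notice that antisymmetry of $\nabla\bm{u}$ is exactly what makes $\bm{B}\cdot\nabla\bm{u}$ cancel against $\bm{B}\times(\nabla\times\bm{u})$ --- fittingly, since the second equation of \eqref{QS} is precisely the ingredient that separates quasisymmetry from weak quasisymmetry.

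Finally I would note that, to be quasisymmetric in the sense of the Definition, $\bm{B}$ must be nowhere vanishing; since $\bm{u}\cdot\bm{B}=C$ and $\bm{u}\times\bm{B}=-\nabla\psi$, this is the statement that $C$ and $\nabla\psi$ do not vanish simultaneously anywhere (automatic, for instance, when $C$ is nowhere zero), and the lemma should be read with this proviso understood.
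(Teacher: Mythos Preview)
Your proof is correct and follows the same overall architecture as the paper: compute $\bm{u}\times\bm{B}=-\nabla\psi$ to get the first equation, expand $\nabla\times(\bm{u}\times\bm{B})=0$ to obtain both $\nabla\cdot\bm{B}=0$ and $[\bm{u},\bm{B}]=0$, and then reduce the second equation to $[\bm{u},\bm{B}]+2\,\bm{S}\cdot\bm{B}$. The one substantive difference is in how $[\bm{u},\bm{B}]=0$ is reached. The paper computes $[\bm{u},\bm{B}]$ directly from the formula for $\bm{B}$, invoking the Killing-field fact that $\mathcal{L}_{\bm{u}}$ is a cross-product derivation commuting with $\nabla$, and only then deduces $\nabla\cdot\bm{B}=0$; you instead dot the relation $\bm{u}\,(\nabla\cdot\bm{B})=[\bm{u},\bm{B}]$ with $\bm{u}$ and use $\bm{u}\cdot\nabla\bm{u}=-\tfrac12\nabla|\bm{u}|^2$ to kill the right-hand side, getting $\nabla\cdot\bm{B}=0$ first and $[\bm{u},\bm{B}]=0$ as a corollary. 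Your route is slightly more elementary in that it avoids appealing to the Lie-derivative machinery, at the cost of a short component computation; the paper's route is cleaner conceptually and makes explicit why the Killing hypothesis enters. Your closing remark about the nowhere-vanishing proviso on $\bm{B}$ is a point the paper leaves implicit.
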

\begin{proof}
$\bm{S} = 0$ implies $\nabla\cdot \bm{u} = 0$ and $\nabla|\bm{u}|^2 + (\nabla\times\bm{u})\times\bm{u} = 0$. In particular we have $\bm{u}\cdot\nabla|\bm{u}|^2 = 0$.

Let $\bm{w} = \bm{u}\times\bm{B}$. We have
\begin{align*}
\bm{w} & = \frac{1}{|\bm{u}|^2}\bm{u}\times\bigg(\bm{u}\times\nabla\psi + C\,\bm{u}\bigg) = \frac{1}{|\bm{u}|^2}(\bm{u}\cdot\nabla\psi)\,\bm{u} - \nabla\psi = - \nabla\psi.
\end{align*}
This immediately gives us $\nabla\times(\bm{u}\times\bm{B}) = 0$.
Moreover, if $\mathcal{L}_{\bm{u}}$ denotes the linear operator on vector fields $\bm{v}$ given by $\mathcal{L}_{\bm{u}}\bm{v} = \bm{u}\cdot\nabla\bm{v} - \bm{v}\cdot\nabla\bm{u} = [\bm{u},\bm{v}]$ then
\begin{align*}
[\bm{u},\bm{B}] & =\bm{u}\cdot\nabla|\bm{u}|^{-2}(\bm{u}\times\nabla\psi + C\,\bm{u})\\
& + |\bm{u}|^{-2}\bigg(\mathcal{L}_{\bm{u}}\bm{u} \times\nabla\psi + \bm{u}\times \mathcal{L}_{\bm{u}}\nabla\psi + (\bm{u}\cdot\nabla C)\bm{u} + C\,\mathcal{L}_{\bm{u}}\bm{u}\bigg)= 0,
\end{align*} 
where we have used the fact that $\bm{u}$ is Killing to infer that $\mathcal{L}_{\bm{u}}$ is a cross-product derivation that commutes with $\nabla$, i.e. $\mathcal{L}_{\bm{u}}\nabla\psi = \nabla (\bm{u}\cdot\nabla\psi)$. Therefore 
\begin{align*}
0= \nabla\times\bm{w} = (\nabla\cdot\bm{B})\,\bm{u} - (\nabla\cdot\bm{u})\,\bm{B} -[\bm{u},\bm{B}] = (\nabla\cdot\bm{B})\,\bm{u},
\end{align*}
which implies $\nabla\cdot\bm{B} = 0$. 

It remains to show that the vector field $\bm{k} = (\nabla\times\bm{B})\times \bm{u} + \nabla(\bm{u}\cdot\bm{B})$ vanishes. Well-known vector identities imply
\begin{align*}
\bm{k} = [\bm{u},\bm{B}] + 2\,\bm{S}\cdot\bm{B}.
\end{align*}
But we have just shown that the first term vanishes, and the second term vanishes because $\bm{S} = 0$. So $\bm{B}$ is quasisymmetric with infinitesimal generator $\bm{u}$, as claimed.
\end{proof}
We remark that this Lemma can be generalised to fields for which $\psi$ is not a global function.  Given a $\bm{u}$-invariant function $C$ and an irrotational vector field $\bm{w}$ such that $\bm{u}\cdot\bm{w} =0$, define a field $\bm{B}$ by replacing $\nabla\psi$ with $\bm{w}$.  Then $\bm{B}$ is still divergence-free and quasisymmetric with infinitesimal generator $\bm{u}$.  This is the most general such $\bm{B}$. To prove this, define $\bm{w} = \bm{B}\times\bm{u}$. From $\nabla\times(\bm{u}\times\bm{B}) = 0$, this field is irrotational. Also $\bm{u}\cdot\bm{w} =0$, so $[\bm{u},\bm{w}] = 0$. Define $C = \bm{B}\cdot\bm{u}$. We have $\bm{u}\cdot\nabla C = 0$. Thus, $\bm{B} = (\bm{u}\times \bm{w} + C\bm{u})/|\bm{u}|^2$.

\noindent Every Killing field $\bm{u}$ is admissible because such functions $\psi,C$ exist (in abundance). For Euclidean metric, $\bm{u}$ is contained in the Euclidean Lie algebra, and therefore has the form $\bm{u}(\bm{x}) = \bm{U} + \bm{L} \times \bm{x}$ in Cartesian coordinates, for some constant vectors $(\bm{U},\bm{L})$.  By translation and rotation we can take $\bm{U},\bm{L}$ both in the $z$-direction.  If $\bm{U}=0$ (the case of axisymmetry) then we can choose $\psi,C$ to be any smooth functions of $r^2, z$ in cylindrical coordinates $(r,\phi,z)$.  In particular we can achieve $\nabla \psi \ne 0$ almost everywhere.  If $\bm{L}=0$ we can choose any smooth functions of $(x,y)$.  In the general case $\bm{U},\bm{L} \ne 0$  (helical symmetry), can choose any smooth functions of $r^2,\zeta$ with $\zeta = z - c\phi$ where $c\ne 0$ is the associated pitch (note this requires periodicity in $z$).

Thus, in the study of admissible $\bm{u}$, it is useful to restrict attention to infinitesimal generators with nowhere-vanishing strain-rate tensors. We call such fields $\bm{u}$ \textbf{non-Killing}. The existence of even one non-Killing admissible $\bm{u}$ would imply existence of genuinely-three-dimensional quasisymmetric magnetic fields. 
We leave out the case of $\bm{u}$ for which $\bm{S}=0$ in some places, nonzero in others.

This Article establishes three foundational results concerning admissible non-Killing $\bm{u}$. First, every such $\bm{u}$ must satisfy a system of partial differential equations that do not involve $\bm{B}$.
\begin{theorem}[necessary conditions for admissibility]\label{necessary_conditions}
If $\bm{u}$ is non-Killing and admissible then it must satisfy the PDE constraints
\begin{align}
0&=\text{\normalfont tr}(\bm{S}) \label{adm_nec_1}\\
0&=\text{\normalfont det}(\bm{S}) \label{adm_nec_2}\\
0&=\bm{u}\cdot \nabla\bm{\pi}_{\parallel} - \nabla\bm{u}^T\cdot \bm{\pi}_{\parallel} - \bm{\pi}_{\parallel}\cdot \nabla\bm{u} \label{adm_nec_3}
\end{align}
where $\bm{S} =\frac{1}{2} (\nabla\bm{u} + \nabla\bm{u}^T)$ and $\bm{\pi}_{\parallel} = 1 - \frac{2\bm{S}\cdot \bm{S}}{\bm{S} : \bm{S}}$ denotes the orthogonal projector onto $\text{ker}\,\bm{S}$.

\end{theorem}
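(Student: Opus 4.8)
The plan is to use admissibility to produce a nowhere-vanishing companion field $\bm{B}$ solving \eqref{QS}, and then to show that \eqref{adm_nec_1}--\eqref{adm_nec_3} are forced by purely pointwise algebraic relations among $\bm{u}$, $\bm{S}$ and $\bm{B}$. First I would extract the two relations $[\bm{u},\bm{B}]=0$ and $\bm{S}\cdot\bm{B}=0$ from \eqref{QS}. Since $\bm{u}$ is divergence-free, \eqref{adm_nec_1} is immediate. The identity $\nabla\times(\bm{u}\times\bm{B}) = (\nabla\cdot\bm{B})\,\bm{u} - (\nabla\cdot\bm{u})\,\bm{B} - [\bm{u},\bm{B}]$, together with $\nabla\cdot\bm{u}=\nabla\cdot\bm{B}=0$, turns the first equation of \eqref{QS} into $[\bm{u},\bm{B}]=0$; then the identity $(\nabla\times\bm{B})\times\bm{u}+\nabla(\bm{u}\cdot\bm{B})=[\bm{u},\bm{B}]+2\,\bm{S}\cdot\bm{B}$ used in the proof of Lemma~\ref{killing_case} turns the second equation of \eqref{QS} into $\bm{S}\cdot\bm{B}=0$.

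Next I would deduce \eqref{adm_nec_2} and pin down $\bm{\pi}_\parallel$. Because $\bm{B}$ never vanishes, $\bm{S}\cdot\bm{B}=0$ forces $\ker\bm{S}\neq\{0\}$, hence $\det\bm{S}=0$, which is \eqref{adm_nec_2}. The symmetric tensor $\bm{S}$ thus has real spectrum including a zero eigenvalue; combined with $\mathrm{tr}\,\bm{S}=0$ and the non-Killing hypothesis $\bm{S}\neq0$, the eigenvalues must be $\{\lambda,-\lambda,0\}$ with $\lambda\neq0$, so $\ker\bm{S}=\mathrm{span}\,\bm{B}$ is exactly one-dimensional. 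A short spectral computation then shows $1-2\,\bm{S}\cdot\bm{S}/(\bm{S}:\bm{S})$ has eigenvalues $\{0,0,1\}$ with unit eigenvalue on $\ker\bm{S}$; being symmetric, it equals the orthogonal projector onto $\ker\bm{S}$, so $\bm{\pi}_\parallel=\bm{B}\otimes\bm{B}/|\bm{B}|^2$.

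For \eqref{adm_nec_3}, I would first note that $|\bm{B}|^2$ is $\bm{u}$-invariant: from $[\bm{u},\bm{B}]=0$ we have $\bm{u}\cdot\nabla\bm{B}=\bm{B}\cdot\nabla\bm{u}$, so $\bm{u}\cdot\nabla|\bm{B}|^2 = 2\,\bm{B}\cdot(\bm{B}\cdot\nabla\bm{u}) = 2\,\bm{B}\cdot\bm{S}\cdot\bm{B}=0$. Equation \eqref{adm_nec_3} is then exactly $\mathcal{L}_{\bm{u}}\!\left(\bm{B}\otimes\bm{B}/|\bm{B}|^2\right)=0$, where $\bm{B}\otimes\bm{B}/|\bm{B}|^2$ is regarded as a symmetric twice-contravariant tensor written in Cartesian components; by the Leibniz rule this follows from $\mathcal{L}_{\bm{u}}\bm{B}=[\bm{u},\bm{B}]=0$ and $\mathcal{L}_{\bm{u}}|\bm{B}|^2=\bm{u}\cdot\nabla|\bm{B}|^2=0$. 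Equivalently, one may simply differentiate $\bm{\pi}_\parallel=\bm{B}\bm{B}^T/|\bm{B}|^2$ along $\bm{u}$ and substitute these two facts.

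I expect the only real obstacle to be the middle step: verifying that the algebraically-defined $\bm{\pi}_\parallel=1-2\,\bm{S}\cdot\bm{S}/(\bm{S}:\bm{S})$ genuinely coincides with the rank-one projector onto $\mathrm{span}\,\bm{B}$ --- which is precisely where \eqref{adm_nec_1}, \eqref{adm_nec_2}, and the non-Killing hypothesis are needed together to force $\dim\ker\bm{S}=1$ --- and, secondarily, bookkeeping the transpose/index conventions so that the Lie-derivative identity matches the written form of \eqref{adm_nec_3}. Everything else is vector-calculus manipulation modelled on the proof of Lemma~\ref{killing_case}.
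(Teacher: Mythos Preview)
Your proposal is correct and follows essentially the same route as the paper: extract $[\bm{u},\bm{B}]=0$ and $\bm{S}\cdot\bm{B}=0$ from \eqref{QS}, use the spectral form $\{\lambda,-\lambda,0\}$ with $\lambda\neq 0$ to identify $\bm{\pi}_\parallel=\bm{b}\bm{b}$, and then verify \eqref{adm_nec_3} as $\mathcal{L}_{\bm{u}}\bm{\pi}_\parallel=0$ via $[\bm{u},\bm{B}]=0$ together with $\bm{u}\cdot\nabla|\bm{B}|^2=0$. The only cosmetic difference is that the paper first passes to the unit eigenvector $\bm{e}_3=\bm{B}/|\bm{B}|$ and shows $[\bm{e}_3,\bm{u}]=0$ before expanding $\bm{\pi}_\parallel=\bm{e}_3\bm{e}_3$, whereas you work directly with $\bm{B}\bm{B}/|\bm{B}|^2$; these are equivalent.
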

\noindent See the discussion in Section \ref{discussion_sec} for an alternative formulation of Eq.\,\eqref{adm_nec_3}. Second, the necessary PDE constraints \eqref{adm_nec_1}-\eqref{adm_nec_3} on $\bm{u}$ are generally \emph{locally} sufficient for admissibility as well. 
\begin{theorem}[local admissibility]\label{local_admissibility_thm}
If $\bm{u}$ is non-Killing and satisfies the PDE constraints \eqref{adm_nec_1}-\eqref{adm_nec_3} then for each $\bm{x}\in Q$ such that
\begin{align*}
(1-\bm{\pi}_\parallel(\bm{x}))\cdot \bm{u}(\bm{x})\neq 0
\end{align*}
there is an open neighborhood $U\ni \bm{x}$ and quasisymmetric $\bm{B}$ defined within $U$ with $\bm{u}$ as its infinitesimal generator.
\end{theorem}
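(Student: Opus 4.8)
The plan is first to use the constraints \eqref{adm_nec_1}--\eqref{adm_nec_3} to fix the algebraic form that any witnessing $\bm{B}$ must take, and then to build $\bm{B}$ by a single quadrature. I would begin with an algebraic reduction of \eqref{QS}. The two vector identities used in the proof of Lemma \ref{killing_case}, namely $\nabla\times(\bm{u}\times\bm{B}) = (\nabla\cdot\bm{B})\,\bm{u} - (\nabla\cdot\bm{u})\,\bm{B} - [\bm{u},\bm{B}]$ and $(\nabla\times\bm{B})\times\bm{u} + \nabla(\bm{u}\cdot\bm{B}) = [\bm{u},\bm{B}] + 2\,\bm{S}\cdot\bm{B}$, combined with $\nabla\cdot\bm{u} = \mathrm{tr}(\bm{S}) = 0$ from \eqref{adm_nec_1}, show that \eqref{QS} is equivalent to the conjunction $\nabla\cdot\bm{B}=0$, $[\bm{u},\bm{B}]=0$, $\bm{S}\cdot\bm{B}=0$. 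The last of these demands that $\bm{B}$ lie pointwise in $\ker\bm{S}$. Now \eqref{adm_nec_1}--\eqref{adm_nec_2} force the real symmetric matrix $\bm{S}$ to have spectrum $\{\lambda,-\lambda,0\}$, and ``non-Killing'' makes $\bm{S}$ (hence $\lambda$) nowhere zero, so $\ker\bm{S}$ is a smooth line field; shrinking to a neighbourhood $U$ of $\bm{x}$ I may pick a smooth unit section $\bm{n}$ of it, whence $\bm{\pi}_\parallel = \bm{n}\bm{n}^T$. It then suffices to produce a nowhere-vanishing scalar $\beta$ on $U$ with $[\bm{u},\beta\bm{n}] = 0$ and $\nabla\cdot(\beta\bm{n}) = 0$, for then $\bm{B}:=\beta\bm{n}$ is the desired field.

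Next I would cash in \eqref{adm_nec_3}. Substituting $\bm{\pi}_\parallel = \bm{n}\bm{n}^T$, a short computation shows \eqref{adm_nec_3} is equivalent to the vanishing of the symmetric tensor $[\bm{u},\bm{n}]\,\bm{n}^T + \bm{n}\,[\bm{u},\bm{n}]^T$, i.e. (contract with $\bm{n}$) to $[\bm{u},\bm{n}] = 0$: the flow of $\bm{u}$ preserves the kernel line. Hence $[\bm{u},\beta\bm{n}] = (\bm{u}\cdot\nabla\beta)\,\bm{n}$, so the commutator condition becomes $\bm{u}\cdot\nabla\beta = 0$, and $\nabla\cdot(\beta\bm{n})=0$ becomes $\bm{n}\cdot\nabla\ln\beta = -\nabla\cdot\bm{n}$. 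This is where $(1-\bm{\pi}_\parallel(\bm{x}))\cdot\bm{u}(\bm{x})\neq 0$ is used: since $1-\bm{\pi}_\parallel$ is the orthogonal projector onto $(\ker\bm{S})^\perp$, the hypothesis says $\bm{u}(\bm{x})\notin\ker\bm{S}(\bm{x})$, so $\bm{u}$ and $\bm{n}$ are linearly independent near $\bm{x}$. As they also commute, the standard rectification theorem for a pair of commuting, independent vector fields (a consequence of Frobenius) furnishes coordinates $(r,s,t)$ on a possibly smaller $U$ with $\bm{u}=\partial_s$ and $\bm{n}=\partial_t$; the conditions on $\beta$ then read $\partial_s\beta = 0$ and $\partial_t\ln\beta = -\nabla\cdot\bm{n}$.

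The crux, and what I expect to be the main obstacle, is compatibility of this overdetermined pair: one must know $\nabla\cdot\bm{n}$ is $s$-independent. This follows from the identity $\nabla\cdot[\bm{u},\bm{n}] = \bm{u}\cdot\nabla(\nabla\cdot\bm{n}) - \bm{n}\cdot\nabla(\nabla\cdot\bm{u})$ evaluated with $[\bm{u},\bm{n}]=0$ and $\nabla\cdot\bm{u}=0$, which gives $\partial_s(\nabla\cdot\bm{n}) = \bm{u}\cdot\nabla(\nabla\cdot\bm{n}) = 0$. So $\nabla\cdot\bm{n} = g(r,t)$ and $\beta := \exp\left(-\int_0^t g(r,t')\,dt'\right)$ is a smooth, strictly positive, $s$-independent solution. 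With $\bm{B} := \beta\bm{n}$ on $U$ one then checks directly $\nabla\cdot\bm{B}=0$, $[\bm{u},\bm{B}] = (\partial_s\beta)\,\bm{n} = 0$, $\bm{S}\cdot\bm{B} = \beta\,\bm{S}\cdot\bm{n} = 0$, and $|\bm{B}| = \beta > 0$, so by the reduction of the first paragraph $\bm{B}$ is quasisymmetric on $U$ with infinitesimal generator $\bm{u}$. Beyond this compatibility point, the remaining work is routine bookkeeping: verifying smoothness of $\bm{n}$ (hence of $g$), that $\bm{u}$ is nonzero near $\bm{x}$, and that the successive shrinkings of $U$ (to where $\lambda \ne 0$, where $\bm{u}\notin\ker\bm{S}$, and where the rectifying coordinates exist) can be carried out at once.
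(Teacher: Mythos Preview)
Your proof is correct. Both you and the paper reduce to finding a nowhere-vanishing scalar $\beta$ with $\bm{u}\cdot\nabla\beta = 0$ and $\bm{n}\cdot\nabla\beta = -\beta\,\nabla\cdot\bm{n}$ (this is the content of the paper's Lemma \ref{B_lemma}), and both use $[\bm{u},\bm{n}]=0$ together with the linear independence of $\bm{u}$ and $\bm{n}$ near $\bm{x}$. The difference lies in how $\beta$ is actually produced. You rectify the commuting pair $(\bm{u},\bm{n})$ to coordinate vector fields $(\partial_s,\partial_t)$ and solve the resulting overdetermined ODE system by a single $t$-quadrature, checking compatibility via the divergence identity $\nabla\cdot[\bm{u},\bm{n}] = \bm{u}\cdot\nabla(\nabla\cdot\bm{n}) - \bm{n}\cdot\nabla(\nabla\cdot\bm{u})$. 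The paper instead invokes Frobenius to obtain a local function $\psi$ with $\bm{u}\cdot\nabla\psi = \bm{n}\cdot\nabla\psi = 0$, writes $\bm{u}\times\bm{n} = \lambda^{-1}\nabla\psi$, and shows (by Lie-differentiating this relation along $\bm{u}$ and then taking a curl) that $\beta:=\lambda$ satisfies the required system; compatibility is never checked separately because it comes for free from $\nabla\times\nabla\psi=0$. Your route is slightly more elementary and self-contained; the paper's has the advantage of yielding the intrinsic formula $\beta = |\nabla\psi|/|\bm{u}\times\bm{n}|$, which globalizes immediately once a global $\psi$ is available and is exactly what is exploited in the proof of Theorem \ref{global_admissibility_thm}.
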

\noindent Finally, the local sufficiency condition extends to a global sufficiency condition in toroidal annuli as follows.
\begin{theorem}[global admissibility]\label{global_admissibility_thm}
Assume $Q$ is diffeomorphic to the toroidal annulus $S^1\times S^1\times [0,1]$, where $S^1 = \mathbb{R}/(2\pi\mathbb{Z})$ denotes the circle. Properties (I) and (II) for a vector field $\bm{u}$ are equivalent.
\begin{itemize}
\item[(I.a)] $\bm{u}$ is admissible and non-Killing with corresponding quasisymmetric magnetic field $\bm{B}:Q\rightarrow\mathbb{R}^3$.
\item[(I.b)] $\bm{u}\times \bm{B}$ is nowhere-vanishing in $Q$.
\item[(I.c)] Both $\bm{u}$ and $\bm{B}$ are tangent to $\partial Q$.
\end{itemize}

\begin{itemize}
\item[(II.a)] $\bm{u}$ is non-Killing and satisfies Eqs.\,\eqref{adm_nec_1}-\eqref{adm_nec_3}.
\item[(II.b)] $(1-\bm{\pi}_\parallel)\cdot \bm{u}$ is nowhere-vanishing in $Q$.
\item[(II.c)] Both $\bm{u}$ and $\text{\normalfont im}\,\bm{\pi}_\parallel$ are tangent to $\partial Q$.
\item[(II.d)] There is a smooth function $\psi:Q\rightarrow\mathbb{R}$ with nowhere-vanishing gradient such that $\bm{u}\cdot \nabla\psi = 0$ and $\bm{\pi}_\parallel\cdot\nabla\psi = 0$. In particular, $\psi$ is constant on $\partial Q$.
\end{itemize}

\end{theorem}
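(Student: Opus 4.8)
The plan is to prove the two implications $(I)\Rightarrow(II)$ and $(II)\Rightarrow(I)$ separately, leveraging Theorem \ref{necessary_conditions} and Theorem \ref{local_admissibility_thm} as the engines for the pointwise/infinitesimal content, and reserving genuinely global arguments for the construction of the global stream function $\psi$.

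For the direction $(I)\Rightarrow(II)$, I would start from a quasisymmetric $\bm{B}$ with generator $\bm{u}$ and derive each piece of $(II)$. Part $(II.a)$ is immediate from Theorem \ref{necessary_conditions}. For $(II.b)$, the idea is that the local construction in Theorem \ref{local_admissibility_thm} shows $(1-\bm{\pi}_\parallel)\cdot\bm{u}$ is (up to a nonzero factor) the component of $\bm{B}$ transverse to $\ker\bm{S}$, or more precisely that $\bm{u}\times\bm{B}$ being nowhere zero forces $\bm{u}$ to have a nonvanishing component in $\operatorname{im}\bm{\pi}_\parallel^\perp$; I expect one shows $\bm{u}\times\bm{B}=0$ somewhere iff $(1-\bm{\pi}_\parallel)\cdot\bm{u}=0$ there, using that $\bm{B}$ and $\bm{u}$ span $\ker\bm{S}$-related subspaces. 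So $(I.b)\Leftrightarrow(II.b)$. For $(II.c)$: $\bm{u}$ tangent to $\partial Q$ is given by $(I.c)$; that $\operatorname{im}\bm{\pi}_\parallel$ is tangent to $\partial Q$ should follow because $\bm{B}$ lies in (or spans, together with $\bm{u}$) $\ker\bm{S}=\operatorname{im}\bm{\pi}_\parallel$, which is a rank-two $\bm{u}$-invariant distribution, and both its generators $\bm{u},\bm{B}$ are tangent to the boundary. Finally $(II.d)$: set $\psi = \bm{B}\times\bm{u}$-potential, i.e. from $\nabla\times(\bm{u}\times\bm{B})=0$ we get $\bm{u}\times\bm{B}=\nabla\psi$ locally on the annulus; the nontrivial point is single-valuedness of $\psi$, which is a cohomological condition $\oint \bm{u}\times\bm{B}\cdot d\bm{\ell}=0$ around the two generating cycles of $Q$. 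One cycle (the ``$\theta$'' poloidal one) is handled because $\bm{u}\times\bm{B}$ is orthogonal to both $\bm{u}$ and $\bm{B}$, which together with tangency to $\partial Q$ forces the flux through a toroidal cross-section to vanish via Stokes/$\nabla\cdot$; the other cycle requires an argument that the corresponding period also vanishes, probably using $\nabla\cdot\bm{B}=0$, $\bm{u}\cdot\nabla\psi=0$, and the annulus topology. Then $\nabla\psi$ is nowhere zero since $\bm{u}\times\bm{B}$ is ($(II.b)$/$(I.b)$), $\bm{u}\cdot\nabla\psi=\bm{u}\cdot(\bm{B}\times\bm{u})=0$, and $\bm{\pi}_\parallel\cdot\nabla\psi=0$ because $\nabla\psi=\bm{u}\times\bm{B}\in(\ker\bm{S})^\perp$; constancy on $\partial Q$ follows since $\nabla\psi$ is normal to $\partial Q$.

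For $(II)\Rightarrow(I)$, I would cover $Q$ by the local neighborhoods $U$ from Theorem \ref{local_admissibility_thm}, each carrying a quasisymmetric $\bm{B}_U$ with generator $\bm{u}$, and then glue using $\psi$ from $(II.d)$ as the global coordinate organizing the construction. The key realization is that, by Lemma \ref{killing_case}'s generalization and the structure revealed in the proof of Theorem \ref{local_admissibility_thm}, the local fields have the form $\bm{B}_U=(\bm{u}\times\nabla\psi + C_U\,\bm{u})/|\bm{u}|^2$-type expressions (or a non-Killing analogue thereof), so that $\bm{B}_U$ is determined by $\psi$ together with a $\bm{u}$-invariant function $C_U$; the ambiguity in $C_U$ on overlaps is a $\bm{u}$-invariant function, and since $\psi$ is a global $\bm{u}$-invariant with $\nabla\psi\neq 0$, the level sets of $\psi$ are $\bm{u}$-invariant surfaces on which one can patch the $C_U$ into a global $C$ using a partition of unity subordinate to the cover, adjusted to be $\bm{u}$-invariant (average along $\bm{u}$-orbits, which are either closed or dense on each $\psi$-surface by the annulus topology). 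This yields a global quasisymmetric $\bm{B}$; then $(I.b)$ follows from $(II.b)$ by the same equivalence used above, and $(I.c)$ follows from $(II.c)$ together with $\bm{u}\cdot\nabla\psi=\bm{\pi}_\parallel\cdot\nabla\psi=0$ forcing $\bm{B}$ to be built from objects tangent to $\partial Q$.

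The main obstacle I anticipate is the global single-valuedness of $\psi$ in the $(I)\Rightarrow(II)$ direction — i.e. showing both periods of the closed $1$-form $\bm{u}\times\bm{B}\cdot d\bm{\ell}$ vanish on the toroidal annulus — and, dually, the globalization of the function $C$ in $(II)\Rightarrow(I)$, which requires controlling how the local freedom in $\bm{B}$ is parametrized and checking that the $\bm{u}$-orbit structure on each $\psi$-level set is tame enough (closed orbits, or irrational flow on a $2$-torus) to run an averaging/partition-of-unity gluing. Both hinge on the same topological fact that $Q\cong S^1\times S^1\times[0,1]$ with $\psi$ as the $[0,1]$-factor and $\bm{u}$ tangent to the $S^1\times S^1$ fibers; I would isolate this as a lemma about $\bm{u}$-invariant cohomology of the annulus before assembling the two implications. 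The remaining steps — the pointwise equivalences $(I.b)\Leftrightarrow(II.b)$, the boundary-tangency bookkeeping, and verifying the algebraic identities relating $\bm{B}$, $\bm{u}$, $\bm{S}$, and $\bm{\pi}_\parallel$ — I expect to be routine given Theorems \ref{necessary_conditions} and \ref{local_admissibility_thm}.
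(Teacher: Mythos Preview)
Your $(I)\Rightarrow(II)$ outline is broadly on track, though the period argument is more convoluted than necessary: since $\bm{u}$ and $\bm{B}$ are both tangent to $\partial Q$ and linearly independent there, $\bm{u}\times\bm{B}$ is normal to $\partial Q$, so its line integral around \emph{any} loop in $\partial Q$ vanishes; because $Q$ deformation-retracts onto a boundary component, this kills both periods at once. No separate Stokes argument per cycle is needed. Also, a dimensional slip: $\ker\bm{S}=\operatorname{im}\bm{\pi}_\parallel$ is rank \emph{one} (the eigenvalues of $\bm{S}$ are $\{\lambda,-\lambda,0\}$), not rank two; the rank-two distribution is $\operatorname{span}(\bm{u})\oplus\ker\bm{S}$. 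Since $\bm{B}=B_3\bm{e}_3$ spans $\ker\bm{S}$, tangency of $\operatorname{im}\bm{\pi}_\parallel$ to $\partial Q$ follows directly from tangency of $\bm{B}$.

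The genuine gap is in $(II)\Rightarrow(I)$. Your plan to glue local fields $\bm{B}_U$ by a partition of unity, with the freedom parametrized by $\bm{u}$-invariant functions $C_U$ to be averaged along $\bm{u}$-orbits, is based on the Killing-case representation from Lemma~\ref{killing_case}, which does \emph{not} describe the non-Killing situation. In the non-Killing case Lemma~\ref{B_lemma} forces $\bm{B}=B_3\bm{e}_3$ with $B_3$ determined by $\bm{u}\cdot\nabla B_3=0$ and $\bm{e}_3\cdot\nabla B_3=-B_3\nabla\cdot\bm{e}_3$; there is no independent $C$. The actual construction requires no gluing at all: from the global $\psi$ one builds a global unit section $\bm{e}_3=\bm{E}/|\bm{E}|$ of $\ker\bm{S}$ via $\bm{E}=\bm{\pi}_\parallel\cdot(\bm{u}\times\nabla\psi)$ (nowhere-vanishing because $\bm{u},\bm{e}_3,\bm{e}_3\times\bm{u}$ are independent and $\nabla\psi\neq 0$), and then the proof of Theorem~\ref{local_admissibility_thm} globalizes verbatim: $\lambda=|\nabla\psi|/|\bm{u}\times\bm{e}_3|$ is a global nowhere-vanishing solution of the $B_3$-equations, so $\bm{B}=\lambda\bm{e}_3$ works on all of $Q$. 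The orbit-averaging step you propose is both unnecessary and fragile (irrational flows on the $\psi$-tori need not yield smooth averages).
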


Properties (I.b), (I.c), (II.b), and (II.c) are technical conditions that roughly say $\bm{u}$ is either admissible or locally admissible in a ``non-degenerate way". Note that $(1-\bm{\pi}_\parallel)\cdot \bm{u}$ nowhere-vanishing is equivalent to the condition that $\bm{u}$ is never in the kernel of $\bm{S}$. Property (II.d) should be interpreted as a global admissibility conditions that restricts the topological behavior of solutions to Eqs.\,\eqref{adm_nec_1}-\eqref{adm_nec_3}. That property (II.d) is topological in nature can be understood as follows. The equation \eqref{adm_nec_3} is equivalent to $\mathcal{L}_{\bm{u}}(\bm{\pi}_\parallel) = 0$, where $\bm{\pi}_\parallel$ is regarded as a degree-$2$ symmetric contravariant tensor field. Note that $\bm{\pi}_\parallel$ is the orthogonal projector onto the line bundle $\text{ker}\,\bm{S}$. It follows that the rank-$2$ vector bundle $\text{span}(\bm{u})\oplus\text{ker}\,\bm{S}$ is integrable in the sense of Frobenius. Thus, in a neighborhood of any point in $Q$, there is a locally-defined function $\psi$ satisfying the properties in (II.d). It is unclear, however, if these locally-defined functions can be glued together to give a smooth globally-defined $\psi$. This is the topological question underlying (II.d). In the Appendix, we give an example of a pair of nowhere-vanishing vector fields $\bm{e}_3,\bm{u}$ defined on a toroidal annulus $Q$ with $|\bm{e}_3|=1$, $\nabla\cdot\bm{u}=0$, $[\bm{e}_3,\bm{u}]=0$, $\bm{e}_3\times\bm{u}$ nowhere-vanishing, and $\bm{e}_3,\bm{u}$ tangent to $\partial Q$, such that no global $\psi$ exists. This shows that if the local admissibility conditions \eqref{adm_nec_1}-\eqref{adm_nec_3} \emph{do} imply existence of a global $\psi$ by themselves then the reason must involve Eq.\,\eqref{adm_nec_2}. The appendix also contains a detailed proof that property (II.d) in Theorem \ref{global_admissibility_thm} can be replaced with the more obviously topological property
\begin{itemize}
\item[(II.d')] Each leaf of the foliation integrating $\text{span}(\bm{u})\oplus\text{ker}\,\bm{S}$ is compact.
\end{itemize}

\section{Derivation of the local admissibility constraints}\label{adm_deriv}
The following self-contained derivation of local admissibility constraints for $\bm{u}$ largely isolates elements of the discussion from \cite{BKM_2020}. That these necessary conditions are also sufficient, as will be explained in Sections \ref{local_adm_sec} and \ref{global_adm_sec}, was not recognized in \cite{BKM_2020}.

Suppose that $\bm{B}$ is quasisymmetric with non-Killing infinitesimal generator $\bm{u}$.  The following argument shows that $\bm{u}$ must satisfy the PDE constraints \eqref{adm_nec_1}-\eqref{adm_nec_3}. In addition, the discussion here leads to Lemma \ref{B_lemma}, which reduces the problem of finding a quasisymmetric $\bm{B}$ with given non-Killing $\bm{u}$ to a system of PDEs with a single unknown.

The equation system \eqref{QS} defining quasisymmetry is equivalent to
\begin{gather}
\bm{B}\cdot \nabla\bm{u} - \bm{u}\cdot \nabla\bm{B} = 0,\quad \bm{u}\cdot \nabla\bm{B} + \nabla\bm{u}\cdot \bm{B}=0,\quad \nabla\cdot\bm{B} = 0.\label{QS_alt}
\end{gather}
The well-known fact that $|\bm{B}|$ admits a symmetry for quasisymmetric $\bm{B}$ then follows from
\begin{align}
\bm{u}\cdot \nabla|\bm{B}|^2 &= 2\,(\bm{u}\cdot \nabla\bm{B})\cdot\bm{B} = 2\bm{B}\cdot\nabla\bm{u}\cdot \bm{B} = -2\bm{B}\cdot\nabla\bm{u}\cdot \bm{B} = 0.\label{modB_sym}
\end{align}

An immediate consequence of \eqref{QS_alt} is
\begin{align}
2\bm{B}\cdot \bm{S} = \bm{B}\cdot (\nabla\bm{u} + \nabla\bm{u}^T ) = \bm{B}\cdot \nabla\bm{u} + \nabla\bm{u}\cdot \bm{B} = 0.\label{kernel_condition}
\end{align}
Since, by definition, $\bm{B}$ is nowhere-vanishing, this implies $\text{det}(\bm{S}) = 0$, which recovers the constraint \eqref{adm_nec_2}. Clearly it applies to any admissible $\bm{u}$ whatsoever, be it Killing,    non-Killing, or anything in-between.

By symmetry of the strain-rate tensor, for each $\bm{x}\in Q$ there is an orthonormal frame $(\bm{e}_1(\bm{x}),\bm{e}_2(\bm{x}),\bm{e}_3(\bm{x}))$ that diagonalizes $\bm{S} (\bm{x})$: 
\[
\bm{S}(\bm{x}) = \lambda_1(\bm{x})\,\bm{e}_1(\bm{x})\bm{e}_1(\bm{x}) + \lambda_2(\bm{x})\,\bm{e}_2(\bm{x})\bm{e}_{2}(\bm{x}) + \lambda_3(\bm{x})\,\bm{e}_3(\bm{x})\bm{e}_3(\bm{x}).
\]
The constraint $\text{det}(\bm{S}) = 0$ implies that at least one of the eigenvalues, say $\lambda_{3}(\bm{x})$, is zero. The incompressibility constraint $0=\nabla\cdot \bm{u} = \text{tr}(\bm{S})$ therefore implies the sum of the remaining two eigenvalues vanishes: $\lambda_1(\bm{x}) + \lambda_2(\bm{x}) = 0$. Upon setting $\lambda(\bm{x}) =\lambda_1(\bm{x})$, these observations lead to the following expression for $\bm{S}(\bm{x})$:
\begin{align}
\bm{S}(\bm{x}) = \lambda(\bm{x})\,(\bm{e}_1(\bm{x})\bm{e}_1(\bm{x}) - \bm{e}_2(\bm{x})\bm{e}_{2}(\bm{x})).\label{delta_formula}
\end{align}
Note that the non-Killing assumption played no role in arriving at this formula.  

Now invoking the fact that $\bm{u}$ is  non-Killing reveals that the eigenvalue $\lambda(\bm{x})$ must be non-zero to ensure $\bm{S}(\bm{x}) \neq 0$. It follows that $\bm{S}(\bm{x})$ has three distinct eigenvalues $\{0,\lambda(\bm{x}),-\lambda(\bm{x})\}$, with $\lambda$ nowhere-vanishing, and that the corresponding unit eigenvectors are uniquely determined up to sign. In this way, the strain-rate tensor of any    non-Killing admissible $\bm{u}$ determines a unique, locally and smoothly-defined frame field $(\bm{e}_1,\bm{e}_2,\bm{e}_3)$ for choice of signs of the three unit vectors. The $8$-fold degeneracy in the frame definition may be reduced to $4$ by requiring right-handed orientation. Our convention will be that the unit vector $\bm{e}_1$ is an eigenvector of $\bm{S}$ with positive eigenvalue, while $\bm{e}_3$ is a null eigenvector of $\bm{S}$.

By Eqs.\,\eqref{kernel_condition} and \eqref{delta_formula},
\begin{align*}
\bm{S}\cdot \bm{B} = \lambda\,(\bm{e}_1\bm{e}_1 - \bm{e}_2\bm{e}_2)\cdot \bm{B} = \lambda\,B_1\,\bm{e}_1 - \lambda\,B_2\,\bm{e}_2 = 0,\quad \bm{B} = B_1\,\bm{e}_1 + B_2\,\bm{e}_2 + B_3\,\bm{e}_3.
\end{align*}
This requires $B_1 = B_2 = 0$ because $\lambda$ is nowhere-vanishing. Therefore the magnetic field $\bm{B}$ must have the following remarkably simple expression in the frame determined by $\bm{S}$:
\begin{align}
\bm{B} = B_3\,\bm{e}_3.\label{reduced_B}
\end{align}
In particular, $B_3 = \pm |\bm{B}|$. From this, the symmetry condition \eqref{modB_sym} for $|\bm{B}|$, and Eq.\,\eqref{QS_alt}, we arrive at the following condition on $\bm{u}$:
\begin{align*}
\bm{e}_3\cdot\nabla\bm{u} - \bm{u}\cdot \nabla\bm{e}_3 & = \frac{1}{B_3}\bm{B}\cdot \nabla\bm{u} - \bm{u}\cdot \nabla\bigg(\frac{\bm{B}}{B_3}\bigg)\\
& = \frac{1}{B_3}\bigg(\bm{B}\cdot \nabla\bm{u} - \bm{u}\cdot \nabla\bm{B}\bigg) - \bigg(\bm{u}\cdot \nabla B_3^{-1}\bigg)\bm{B}\\
& = 0.
\end{align*}

To complete the proof of Theorem \ref{necessary_conditions}, first observe that $\bm{\pi}_\parallel$ is given by
\begin{align*}
\bm{\pi}_\parallel = 1 - \frac{2\bm{S}\cdot\bm{S}}{\bm{S}:\bm{S}} = 1 - \frac{2\,\lambda^2 (\bm{e}_1\bm{e}_1 +\bm{e}_2\bm{e}_2)}{\lambda^2\cdot 2} = \bm{e}_3\bm{e}_3.
\end{align*}
It follows that
\begin{align*}
\bm{u}\cdot \nabla\bm{\pi}_{\parallel} - \nabla\bm{u}^T\cdot \bm{\pi}_{\parallel} - \bm{\pi}_{\parallel}\cdot \nabla\bm{u} & = \bm{u}\cdot\nabla (\bm{e}_3\bm{e}_3) - \nabla\bm{u}^T\cdot(\bm{e}_3\bm{e}_3) - (\bm{e}_3\bm{e}_3)\cdot\nabla\bm{u}\\
& = (\bm{u}\cdot\nabla\bm{e}_3)\bm{e}_3 + \bm{e}_3(\bm{u}\cdot\nabla\bm{e}_3) - (\bm{e}_3\cdot\nabla\bm{u})\bm{e}_3 - \bm{e}_3(\bm{e}_3\cdot\nabla\bm{u})\\
& = 0,
\end{align*}
which recovers Eq.\,\eqref{adm_nec_3}, as claimed.

%

The preceding argument successfully obtained the PDE constraints \eqref{adm_nec_1}-\eqref{adm_nec_3}. It also identified a notable representation for quasisymmetric magnetic fields $\bm{B}$ that dramatically simplifies the problem of finding a quasisymmetric $\bm{B}$ given an admissible $\bm{u}$. This is summarized in the following Lemma.
\begin{lemma}\label{B_lemma}
Fix a non-Killing $\bm{u}$. Assume there is a globally-defined null eigenvector, $\bm{e}_3$, for the $\bm{u}$-strain-rate tensor. Also assume $\bm{u}$ satisfies the PDE constraints \eqref{adm_nec_1}-\eqref{adm_nec_3}. The nowhere-vanishing vector field $\bm{B}$ is quasisymmetric with infinitesimal generator $\bm{u}$ if and only if $\bm{B} = B_3\,\bm{e}_3$, $B_3$ is nowhere-vanishing, and $B_3$ satisfies
\begin{align}
\bm{u}\cdot \nabla B_3 = 0,\quad \bm{e}_3\cdot\nabla B_3 = - B_3\,\nabla\cdot \bm{e}_3.\label{B3_eqn_lemma}
\end{align}
\end{lemma}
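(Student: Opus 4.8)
\emph{Plan.} The statement is an equivalence, so I would prove the two implications separately; throughout, $\bm{e}_3$ denotes the unit null eigenvector of $\bm{S}$ fixed by the frame convention introduced above, so that $|\bm{e}_3|=1$ and $\bm{S}\cdot\bm{e}_3=0$ (this normalization is used below). The forward implication is essentially already contained in the derivation preceding the Lemma: if $\bm{B}$ is quasisymmetric with generator $\bm{u}$, then Eq.\,\eqref{reduced_B} gives $\bm{B}=B_3\,\bm{e}_3$ with $B_3=\bm{B}\cdot\bm{e}_3=\pm|\bm{B}|$, which is nowhere-vanishing because $\bm{B}$ is; the symmetry identity \eqref{modB_sym} applied to $|\bm{B}|^2=B_3^2$ gives $2B_3\,\bm{u}\cdot\nabla B_3=0$, hence $\bm{u}\cdot\nabla B_3=0$; and expanding $0=\nabla\cdot\bm{B}=\bm{e}_3\cdot\nabla B_3+B_3\,\nabla\cdot\bm{e}_3$ gives the second equation of \eqref{B3_eqn_lemma}.

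For the reverse implication I would first extract from the hypotheses two identities not involving $\bm{B}$. Since $\bm{e}_3$ spans $\ker\bm{S}$, we get $2\bm{S}\cdot\bm{e}_3=\bm{e}_3\cdot\nabla\bm{u}+\nabla\bm{u}\cdot\bm{e}_3=0$. Next, constraint \eqref{adm_nec_3} was shown above to read $(\mathcal{L}_{\bm{u}}\bm{e}_3)\,\bm{e}_3+\bm{e}_3\,(\mathcal{L}_{\bm{u}}\bm{e}_3)=0$; contracting this symmetric-tensor identity on the right with $\bm{e}_3$ and using $|\bm{e}_3|=1$ gives $\mathcal{L}_{\bm{u}}\bm{e}_3=-(\bm{e}_3\cdot\mathcal{L}_{\bm{u}}\bm{e}_3)\,\bm{e}_3$, and contracting once more with $\bm{e}_3$ forces $\bm{e}_3\cdot\mathcal{L}_{\bm{u}}\bm{e}_3=0$, hence $\mathcal{L}_{\bm{u}}\bm{e}_3=0$, i.e. $\bm{u}\cdot\nabla\bm{e}_3=\bm{e}_3\cdot\nabla\bm{u}$. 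With these in hand I would substitute $\bm{B}=B_3\,\bm{e}_3$, with $B_3$ nowhere-vanishing and satisfying \eqref{B3_eqn_lemma}, into the three equations \eqref{QS_alt}: (i) $\nabla\cdot\bm{B}=\bm{e}_3\cdot\nabla B_3+B_3\,\nabla\cdot\bm{e}_3=0$ is exactly the second equation of \eqref{B3_eqn_lemma}; (ii) $\bm{B}\cdot\nabla\bm{u}-\bm{u}\cdot\nabla\bm{B}=B_3(\bm{e}_3\cdot\nabla\bm{u}-\bm{u}\cdot\nabla\bm{e}_3)-(\bm{u}\cdot\nabla B_3)\,\bm{e}_3=0$ by $\mathcal{L}_{\bm{u}}\bm{e}_3=0$ and the first equation of \eqref{B3_eqn_lemma}; (iii) $\bm{u}\cdot\nabla\bm{B}+\nabla\bm{u}\cdot\bm{B}=(\bm{u}\cdot\nabla B_3)\,\bm{e}_3+B_3(\bm{u}\cdot\nabla\bm{e}_3+\nabla\bm{u}\cdot\bm{e}_3)=B_3(\bm{e}_3\cdot\nabla\bm{u}+\nabla\bm{u}\cdot\bm{e}_3)=0$, using $\bm{u}\cdot\nabla B_3=0$, then $\mathcal{L}_{\bm{u}}\bm{e}_3=0$, then $\bm{S}\cdot\bm{e}_3=0$. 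Since $B_3$ is nowhere-vanishing so is $\bm{B}$, and therefore $\bm{B}$ is quasisymmetric with infinitesimal generator $\bm{u}$.

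Since nearly every computation above is direct once the conventions $2\bm{S}=\nabla\bm{u}+\nabla\bm{u}^T$ and $\bm{v}\cdot\nabla\bm{u}^T=\nabla\bm{u}\cdot\bm{v}$ are fixed, I expect the only genuinely delicate point to be the passage from the tensorial constraint \eqref{adm_nec_3} to the vector identity $\mathcal{L}_{\bm{u}}\bm{e}_3=0$; this is exactly where $|\bm{e}_3|=1$ enters, since for a merely nowhere-vanishing null eigenvector one recovers only $\mathcal{L}_{\bm{u}}\bm{e}_3\parallel\bm{e}_3$ and the stated equations for $B_3$ acquire an extra multiplicative factor proportional to $\bm{u}\cdot\nabla\ln|\bm{e}_3|$. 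I would therefore state the normalization of $\bm{e}_3$ explicitly at the start of the proof.
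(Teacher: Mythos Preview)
Your proposal is correct and follows essentially the same approach as the paper: both directions are handled as you describe, and the reverse implication proceeds by substituting $\bm{B}=B_3\bm{e}_3$ into the three equations \eqref{QS_alt} and reducing each to zero using $\bm{S}\cdot\bm{e}_3=0$, $[\bm{u},\bm{e}_3]=0$, and the two scalar equations \eqref{B3_eqn_lemma}. Your explicit deduction of $\mathcal{L}_{\bm{u}}\bm{e}_3=0$ from the symmetric-tensor identity by contraction with $\bm{e}_3$ (using $|\bm{e}_3|=1$) is a step the paper leaves implicit, and your closing remark about why the unit normalization is essential is a useful addition.
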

\begin{proof}
We have already shown that if $\bm{B}$ is nowhere-vanishing and quasisymmetric with non-Killing infinitesimal generator $\bm{u}$ then (a) $\bm{B} = B_3\,\bm{e}_3$ and (b) $\bm{u}\cdot \nabla B_3 = 0$. The condition $\bm{e}_3\cdot\nabla B_3 = - B_3\,\nabla\cdot \bm{e}_3$ follows simply from $ 0 = \nabla\cdot \bm{B} = \nabla\cdot (B_3\,\bm{e}_3)$.

It is still necessary to show that $\bm{B} = B_3\,\bm{e}_3$ is quasisymmetric when $B_3$ satisfies Eq.\,\eqref{B3_eqn_lemma}. But this follows from the direct calculations,
\begin{align}
\nabla\cdot\bm{B} & = \bm{e}_3\cdot\nabla B_3 + B_3\,\nabla\cdot\bm{e}_3 \label{comp_check_1}\\
\nabla\times(\bm{u}\times\bm{B}) &=\bm{B}\cdot \nabla\bm{u} - \bm{u}\cdot\nabla\bm{B} = B_3\,[\bm{e}_3,\bm{u}] -( \bm{u}\cdot \nabla B_3)\bm{e}_3 \label{comp_check_2}\\
(\nabla\times\bm{B})\times\bm{u} + \nabla(\bm{u}\cdot\bm{B}) &= \bm{u}\cdot\nabla\bm{B} + \nabla\bm{u}\cdot\bm{B} = (\bm{u}\cdot\nabla B_3)\,\bm{e}_3 + 2B_3\,\bm{S}\cdot \bm{e}_3 .\label{comp_check_3}
\end{align}
The right-hand-side of Eq.\,\eqref{comp_check_1} vanishes because $B_3$ satisfies Eq.\,\eqref{B3_eqn_lemma}. The right-hand-side of Eq.\,\eqref{comp_check_3} vanishes because $\bm{e}_3$ is a null eigenvector for $\bm{S}$ and $B_3$ satisfies Eq.\,\eqref{B3_eqn_lemma}. The right-hand-side of Eq.\,\eqref{comp_check_2} vanishes because
\begin{align*}
\bm{u}\cdot \nabla\bm{\pi}_{\parallel} - \nabla\bm{u}^T\cdot \bm{\pi}_{\parallel} - \bm{\pi}_{\parallel}\cdot \nabla\bm{u} & = \bm{u}\cdot\nabla (\bm{e}_3\bm{e}_3) - \nabla\bm{u}^T\cdot(\bm{e}_3\bm{e}_3) - (\bm{e}_3\bm{e}_3)\cdot\nabla\bm{u}\\
& = (\bm{u}\cdot\nabla\bm{e}_3)\bm{e}_3 + \bm{e}_3(\bm{u}\cdot\nabla\bm{e}_3) - (\bm{e}_3\cdot\nabla\bm{u})\bm{e}_3 - \bm{e}_3(\bm{e}_3\cdot\nabla\bm{u})\\
& = [\bm{u},\bm{e}_3]\,\bm{e}_3 + \bm{e}_3\,[\bm{u},\bm{e}_3]\\
& = 0,
\end{align*}
and $B_3$ satisfies Eq.\,\eqref{B3_eqn_lemma}.
\end{proof}

\section{Local admissibility\label{local_adm_sec}}
Next we provide a proof of Theorem \ref{local_admissibility_thm}.
\begin{proof}
The proof proceeds by constructing a smooth nowhere-vanishing solution $B_3$ of the equations $\bm{u}\cdot\nabla B_3 = 0$ and $\bm{e}_3\cdot\nabla B_3 = - B_3\,\nabla\cdot \bm{e}_3$ in a neighborhood of each point $\bm{x}$ where $\bm{u}(\bm{x})\times\bm{e}_3(\bm{x})\neq 0$. Given such a $B_3$, the vector field $\bm{B} = B_3\,\bm{e}_3$ is quasisymmetric with infinitesimal generator $\bm{u}$ by Lemma \ref{B_lemma}.

Suppose $\bm{x}\in Q$ and $\bm{u}(\bm{x})\times\bm{e}_3(\bm{x})\neq0 $. Then there is an open neighborhood $U$ of $\bm{x}$ on which $\bm{u}\times\bm{e}_3$ is nowhere-vanishing. By $[\bm{u},\bm{e}_3] = 0$ and the Frobenius theorem, there is a smooth function $\psi:U\rightarrow \mathbb{R}$ with nowhere-vanishing gradient such that $\bm{e}_3\cdot\nabla\psi = \bm{u}\cdot\nabla\psi =0$. Therefore there is a smooth, nowhere-vanishing function $\lambda$ such that $\bm{u}\times\bm{e}_3 = \frac{1}{\lambda}\nabla\psi$. If $\Omega$ denotes the Riemannian volume form on $Q$, the latter condition is equivalent to $\iota_{\bm{e}_3}\iota_{\bm{u}}\Omega = \frac{1}{\lambda}d\psi$. Lie-differentiating both sides of this formula with respect to $\bm{u}$ and using $\nabla\cdot\bm{u} = 0$, $[\bm{e}_3,\bm{u}]=0$, then implies
$\mathcal{L}_{\bm{u}}(1/\lambda)\,d\psi = 0$. Since $d\psi$ is nowhere-vanishing, the function $\lambda$ must therefore be constant along $\bm{u}$-lines, $\bm{u}\cdot\nabla\lambda = 0$. Now taking the curl of $\bm{u}\times (\lambda\bm{e}_3) = \nabla\psi$ gives
\begin{align*}
0 = \nabla\cdot (\lambda\bm{e}_3)\,\bm{u} + [\lambda\bm{e}_3,\bm{u}] =  \nabla\cdot (\lambda\bm{e}_3)\,\bm{u}.
\end{align*}
Since $\bm{u}$ is nowhere-vanishing on $U$ this implies $\nabla\cdot (\lambda\bm{e}_3) = 0$, or $\bm{e}_3\cdot\nabla\lambda = -\lambda\,\nabla\cdot\bm{e}_3$. We conclude that $B_3 = \lambda$ is the desired solution of $\bm{u}\cdot\nabla B_3 = 0$, $\bm{e}_3\cdot\nabla B_3 = - B_3\nabla\cdot\bm{e}_3$.

\end{proof}

\section{Global admissibility\label{global_adm_sec}}
Finally, we prove Theorem \ref{global_admissibility_thm}. 
\begin{proof}
Suppose $\bm{B}$ is quasisymmetric with non-Killing infinitesimal generator $\bm{u}$. Also assume that $\bm{u}\times\bm{B}$ is nowhere vanishing and that both $\bm{u}$ and $\bm{B}$ are tangent to $\partial Q$. Theorem \ref{necessary_conditions} immediately implies $\bm{u}$ satisfies \eqref{adm_nec_1}-\eqref{adm_nec_3}. The equation \eqref{kernel_condition} immediately implies that $\bm{e}_3 = \bm{b}$ is a nowhere-vanishing global section of the line bundle $\text{ker}\,\bm{S}$. The orthogonal projector onto $\text{ker}\,\bm{S}$ is therefore $\bm{\pi}_\parallel = \bm{b}\bm{b}$, which shows $(1-\bm{\pi}_\parallel)\cdot\bm{u} = \bm{b}\times (\bm{u}\times\bm{b})$ is nowhere-vanishing in $Q$. To establish existence of $\psi$, consider the vector field $\bm{w} = \bm{u}\times\bm{B}$. The curl of $\bm{w}$ vanishes. Moreover, the line integral of $\bm{w}$ around any closed loop in $\partial Q$ vanishes because $\bm{u}$ and $\bm{B}$ provide a basis for vectors tangent to $\partial Q$. Since $Q$ is homotopic to either component of $\partial Q$, it follows that $\bm{w} = \nabla \psi$ for some single-valued function $\psi:Q\rightarrow \mathbb{R}$. Clearly $\bm{u}\cdot \nabla\psi = \bm{B}\cdot \nabla\psi = 0$ and $\nabla\psi$ is nowhere-vanishing.


Now for the converse. Let $\bm{E} = \bm{\pi}_\parallel\cdot (\bm{u}\times\nabla\psi)$. Clearly $\bm{E}$  takes values in $\text{ker}\,\bm{S}$. We claim $\bm{E}$ is also nowhere-vanishing. For suppose it vanished at $\bm{x}\in Q$. Consider an open neighborhood $U$ of $\bm{x}$ and an orthonormal frame $(\bm{e}_1,\bm{e}_2,\bm{e}_3)$ defined on $U$ such that $\bm{\pi}_\parallel = \bm{e}_3\bm{e}_3$. Vanishing of $\bm{E}$ at $\bm{x}$  implies $\bm{e}_3(\bm{x})\cdot (\bm{u}(\bm{x})\times\nabla\psi(\bm{x})) = (\bm{e_3}(\bm{x})\times\bm{u}(\bm{x}))\cdot \nabla\psi(\bm{x}) = 0$. The vector $\bm{e_3}(\bm{x})\times\bm{u}(\bm{x})$ is not zero because $(1-\bm{\pi}_\parallel)\cdot\bm{u}$ is nowhere-vanishing. Moreover, $\bm{\pi}_\parallel\cdot\nabla\psi = 0$ implies $\bm{e}_3\cdot\nabla\psi = 0$ in $U$. Thus $\nabla\psi(\bm{x})$ is orthogonal to the three linearly-independent vectors $\bm{u}(\bm{x})$, $\bm{e}_3(\bm{x})$, and $\bm{e}_3(\bm{x})\times\bm{u}(\bm{x})$, contradicting the fact that $\nabla\psi$ is nowhere-vanishing. We conclude that $\bm{E}$ is nowhere-vanishing, as claimed, and $\bm{e}_3 = \bm{E}/|\bm{E}|$ is a globally-defined unit vector field that takes values in $\text{ker}\,\bm{S}$. In addition, $\bm{e}_3\times\bm{u}$ is nowhere-vanishing and $\bm{e}_3$ is tangent to $\partial Q$. Repeating the argument from the proof of Theorem \ref{local_admissibility_thm} now shows that $\lambda = |\nabla\psi|/|\bm{u}\times\bm{e}_3|$ is a nowhere-vanishing, globally-defined solution of $\bm{u}\cdot\nabla \lambda = 0$, $\bm{e}_3\cdot\nabla \lambda = - \lambda\,(\nabla\cdot\bm{e}_3)$. Lemma \ref{B_lemma} therefore implies that $\bm{B} = \lambda\,\bm{e}_3$ is quasisymmetric with infinitesimal generator $\bm{u}$. Moreover, $\bm{u}\times\bm{B}  =\lambda \bm{u}\times\bm{e}_3$ is nowhere-vanishing in $Q$ and $\bm{B} = \lambda\,\bm{e}_3$ is tangent to $\partial Q$.

\end{proof}

\section{Discussion\label{discussion_sec}}
The proofs of Theorem \ref{local_admissibility_thm} and Theorem \ref{global_admissibility_thm} involve explicit construction of quasisymmetric $\bm{B}$ with a given admissible non-Killing infinitesimal generator $\bm{u}$. These constructions solve ``half" of the problem (step (2) from the introduction) of constructing genuinely-three-dimensional quasisymmetric $\bm{B}$. The other ``half" problem remains open --- the existence question for smooth non-Killing admissible $\bm{u}$ needs to be addressed. A subsequent publication will apply Cartan's method of prolongation to involution, along with the Cartan-K\"{a}hler theorem (see \cite{Bryant_EDS_1991}), to resolve at least the local existence problem. 

The admissibility problem can also be studied for weak quasisymmetry, as defined in \cite{Rodriguez_2020}. Details of such analysis will appear in future publications.


Let $\bm{e}_3$ be a globally-defined unit vector field $\bm{e}_3$ in the null space of $\bm{S}$. The proof of Theorem \ref{necessary_conditions} in Section \ref{adm_deriv} shows that the condition \eqref{adm_nec_3} is equivalent to $[\bm{e}_3,\bm{u}] = 0$. The latter condition may be expressed in terms of the coefficients of $\bm{S}$ in Cartesian coordinates $(x,y,z)$ as follows. Non-killing $\bm{u}$ implies some principal cofactor for $\bm{S}$ is non-zero. Thus, in a region where, say, the $zz$-cofactor for $\bm{S}$ is non-zero, an explicit formula for $\bm{e}_3$ is
\begin{align*}
\bm{v} = \begin{pmatrix} \bm{S}_{xy}\bm{S}_{yz} - \bm{S}_{yy}\bm{S}_{xz}\\
\bm{S}_{xz}\bm{S}_{yx} - \bm{S}_{xx}\bm{S}_{yz}\\ \bm{S}_{xx}\bm{S}_{yy} - \bm{S}_{xy}^2\end{pmatrix},\quad \bm{e}_3 = \bm{v}/|\bm{v}|.
\end{align*}
This allows re-writing the condition $[\bm{u},\bm{e}_3] = 0$ as an ordinary PDE in the components of $\bm{u}$, namely $\bm{v}\times[\bm{u},\bm{v}] = 0$.

By repeating the proof of Theorem \ref{global_admissibility_thm} with $\psi$ replaced with $\psi^\prime = f(\psi)$, where $f$ is any smooth monotone single-variable function, we find that the most general $\bm{B}$ with a given non-Killing $\bm{u}$ as its infinitesimal generator is given by
\begin{align*}
\bm{B} = \frac{|f^\prime(\psi)||\nabla\psi|}{|\bm{u}\times\bm{e}_3|}\bm{e}_3.
\end{align*}
Thus, once the symmetry is known, $\bm{B}$ is determined by a single free flux function $f(\psi)$. Apparently the space of genuinely-three-dimensional quasisymmetric $\bm{B}$ is not much larger than the space of non-Killing admissible $\bm{u}$! The representation \eqref{B_stream_function_rep} should be compared with the well-known representation given in Lemma \ref{killing_case}.


An alternative route to the equations found here for a non-Killing $\bm{u}$ is via Theorems X.2 and X.3 of \cite{BKM_2020}.  $\bm{B}$ can be eliminated from these by using $\bm{B}=(\bm{u}\times \nabla \psi + C\bm{u})/|\bm{u}|^2$ with $C = \bm{u}\cdot \bm{B}$, and we had proved $\bm{u}\cdot\nabla C=0$.  Then $\psi$ can be eliminated by $\nabla \psi = |\bm{B}| \bm{u}\times \bm{b}$.

The global characterization of admissible $\bm{u}$ provided by Theorem \ref{global_admissibility_thm} requires linear independence of $\bm{u}$ and $\bm{e}_3$. Since any quasisymmetric $\bm{B}$ is parallel to $\bm{e}_3$ and it is well-known that $\bm{u}$ is parallel to $\bm{B}$ along a magnetic axis, Theorem \ref{global_admissibility_thm} cannot be applied naively when the spatial domain $Q$ is a solid torus instead of a toroidal annulus. Obtaining a global characterization of three-dimensional admissible $\bm{u}$ in the presence of magnetic axes will be the subject of future work.

The arguments presented in this Article assumed the spatial domain $Q$ is a region in $\mathbb{R}^3$ and that the metric tensor on $Q$ is flat. They generalize readily to the more general situation where $Q$ is a Riemannian $3$-manifold with possibly-non-flat metric tensor $g$. The strain-rate tensor then becomes $\bm{S} = g^{-1}[\mathcal{L}_{\bm{u}}g]g^{-1}/2$, where $\mathcal{L}_{\bm{u}}g$ denotes the Lie derivative of $g$ along $\bm{u}$ and $g^{-1}$ indicates raising indices using the metric $g$. Such non-flat metrics have become interesting to consider, even in terrestrial settings, due to recent results from \cite{Cardona_Duignan_Perrella_2023} showing that smooth non-symmetric solutions of magnetohydrostatics with non-trivial rotational transform exist for non-flat metrics that approximate the flat metric with arbitrary precision.

\section*{Acknowledgements}

This work was supported by a grant from the Simons Foundation (601970, RSM). For the purpose of open access, the authors have applied a Creative Commons Attribution (CC-BY) licence to any Author Accepted Manuscript version arising from this submission. This work was also supported by U.S. Department of Energy Contract No. DE-FG05-80ET-53088 (JWB).

\bibliographystyle{jpp}



\providecommand{\noopsort}[1]{}\providecommand{\singleletter}[1]{#1}%

\makeatletter
\def\fps@table{h}
\def\fps@figure{h}
\makeatother

\section*{Appendix}

\subsection{An illuminating counterexample}
Consider the toroidal annulus $Q = [-1,1]\times S^1\times S^1\ni (r,\theta,\phi)$. Equip $Q$ with the metric tensor $g = dr^2 + d\theta^2 + d\phi^2$. Define the vector fields $\bm{e}_3,\bm{u}$ on $Q$ according to
\begin{align*}
\bm{u} = \partial_\phi,\quad \bm{e}_3 = \frac{(r+1)(r-1)}{\sqrt{(r+1)^2(r-1)^2 + 1}}\partial_r + \frac{1}{\sqrt{(r+1)^2(r-1)^2 + 1}}\,\partial_\theta.
\end{align*}
It's easy to see that $\bm{e}_3$ is a unit vector, $[\bm{e}_3,\bm{u}]=0$, $\nabla\cdot\bm{u} = 0$, and $\bm{u},\bm{e}_3$ are everywhere linearly-independent. The $2$-planes spanned by $\bm{e}_3,\bm{u}$ are given as the vanishing locus of the $1$-form $\lambda$ on $Q$ defined by
\begin{align}
\lambda = -dr + (r+1)(r-1)d\theta.\label{example_lambda}
\end{align}
The form $\lambda$ satisfies the Frobenius integrability condition $\lambda\wedge d\lambda = 0$. It is nowhere-vanishing on $Q$.

One way to state the Frobenius integrability theorem is $\lambda\wedge d\lambda = 0$ implies there is a locally-defined smooth positive function $h$ such that $h\lambda$ is closed. The function $h$ is known as an \textbf{integrating factor}. If there was a smooth function $\psi$ with nowhere-vanishing gradient such that $\bm{u}\cdot\nabla\psi = \bm{e}_3\cdot\nabla\psi = 0$ then $d\psi$ would be parallel to $\lambda$, thus implying existence of a global integrating factor $h$. The following argument shows that $h$ cannot be globally defined in this example, and therefore neither can $\psi$.

Suppose $h$ were globally defined. Then there would be constants $a_\theta,a_{\phi}\in\mathbb{R}$ and a smooth function $\chi:Q\rightarrow\mathbb{R}$ such that
\begin{align*}
h\lambda = a_\theta\,d\theta + a_\phi\,d\phi + d\chi.
\end{align*}
The integrals of $h\lambda$ along the parameterized curves
\begin{align*}
c_\theta(\zeta) = (1,\zeta,0),\quad c_\phi(\zeta) = (1,0,\zeta),
\end{align*}
each vanish because $\lambda$ vanishes when pulled back to either boundary component of $Q$. Thus,
\begin{align*}
0=\int_{c_\theta}h\lambda=2\pi\,a_\theta,\quad 0 = \int_{c_\phi}h\lambda = 2\pi\,a_\phi,
\end{align*}
which shows $h\lambda = d\chi$. In particular, the differential $d\chi$ is nowhere-vanishing on $Q$, $\chi$ is constant on each of the boundary components of $Q$, and the level sets of $\chi$ are the integral manifolds for $\lambda$. By nowhere-vanishing $d\chi$, the level sets of $\chi$ are all embedded submanifolds in $Q$ with a common diffeomorphism type. But since each boundary component of $Q$ is simultaneously a $2$-torus and a level set for $\chi$ it follows that each level set of $\chi$ is an embedded $2$-torus.

We will now show that, except for the boundary components of $\partial Q$, each integral manifold for $\lambda$ is also diffeomorphic to a cylinder $\mathbb{R}\times S^1$. Since cylinders are not diffeomorphic to 2-tori, this contradiction implies that there can be no global $h$.

The integral manifolds for $\lambda$ can be computed explicitly from the definition of $\lambda$,  Eq.\,\eqref{example_lambda}. Consider the system of ordinary differential equations in the annulus $[-1,1]\times S^1\ni (r,\theta)$ defined by
\begin{align}
\dot{r} = (r+1)(r-1),\quad \dot{\theta}=1.\label{basic_ode}
\end{align}
It is easy to see that the integral manifolds for $\lambda$ are each of the form $\gamma\times S^1$, where $\gamma$ is an integral curve for \eqref{basic_ode} and $S^1$ denotes the $\phi$-axis. In other words, the integral manifolds for $\lambda$ are surfaces of revolution generated by integral curves of \eqref{basic_ode}. The general solution for \eqref{basic_ode} is
\begin{align*}
r(t) = \frac{1-\left[\frac{1-r_0}{1+r_0}\right]\,e^{2t}}{1+\left[\frac{1-r_0}{1+r_0}\right]\,e^{2t}},\quad \theta(t) = \theta_0 + t\text{ mod }2\pi,\quad (r_0,\theta_0)\in [-1,1]\times S^1.
\end{align*}
For $r_0\in (-1,1)$ the derivative $\dot{r}(t)$ is strictly negative for all $t\in \mathbb{R}$ and $\lim_{t\rightarrow\infty}r(t) = -1$, $\lim_{t\rightarrow-\infty}r(t) =1$. Therefore $t\mapsto r(t)$ defines a diffeomorphism $\mathbb{R}\rightarrow (-1,1)$. It follows that the image of $\gamma(t) =  (r(t),\theta(t))$ is an immersed $1$-manifold diffeomorphic to $\mathbb{R}$. Hence, the revolution $\gamma\times S^1$ is diffeomorphic to $\mathbb{R}\times S^1$, as claimed.

\subsection{Replacing (II.d) with (II.d')}
Here we prove that Theorem \ref{global_admissibility_thm} is still valid if property (b) is replaced with property (b'). In other words, the only global admissibility condition that must be appended to the local conditions \eqref{adm_nec_1}-\eqref{adm_nec_3} is that the leaves of the foliation tangent to $\text{span}(\bm{e}_1,\bm{u})$ should be compact. Note the appearance of non-compact leaves in the counterexample above.

We argue by establishing the following more general result.
\begin{theorem}\label{thm_compact_leaves}
For $n\in\mathbb{N}$, let $M$ be a compact connected $(n+1)$-manifold with boundary that embeds in $\mathbb{R}^{n+1}$. Let $\mathcal{F}$ be a codimension-$1$ foliation of $M$ such that
\begin{itemize}
\item[1.] Each leaf in $\mathcal{F}$ is compact
\item[2.] $\mathcal{F}$ is transitive. (For each leaf $L$ and $x,y\in L$ there is a leaf-preserving diffeomorphism $F:M\rightarrow M$ with $y=F(x)$.)
\end{itemize}
There is a smooth function $\psi:M\rightarrow\mathbb{R}$ with $d\psi$ nowhere-vanishing such that the leaves of $\mathcal{F}$ are precisely the level sets of $\psi$.
\end{theorem}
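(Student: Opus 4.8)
The plan is to construct $\psi$ by integrating the codimension-$1$ distribution tangent to $\mathcal{F}$ and using compactness of the leaves together with transitivity to rule out holonomy obstructions. First I would appeal to the fact that $M$ embeds in $\mathbb{R}^{n+1}$, hence $M$ is orientable, and so is the line bundle $TM/T\mathcal{F}$ (the normal bundle of the foliation), because a codimension-$1$ distribution on an orientable manifold is transversely orientable precisely when its normal bundle is trivial; transitivity will be used to propagate a local choice of coorientation to a global one. Concretely, pick at a point a defining $1$-form $\alpha$ with $\ker\alpha=T\mathcal{F}$ and $\alpha\wedge d\alpha=0$; transversal orientability means $\alpha$ extends to a global nowhere-vanishing $1$-form, and Frobenius gives a local integrating factor, i.e. locally $\alpha = h\,d\psi_{\text{loc}}$ with $h>0$.

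Next I would invoke the global structure theory of codimension-$1$ foliations with all leaves compact. The key classical input is a theorem (Reeb stability in its global form, or the Reeb--Ehresmann theory of foliations with compact leaves on compact manifolds): a transversely orientable codimension-$1$ foliation of a compact manifold, all of whose leaves are compact, is a fibration over $S^1$ or over a closed interval $[0,1]$; the leaf space is a compact $1$-manifold. Here, since $M$ has boundary and the two boundary components are leaves (this is where property 2, transitivity, forces $\partial M$ to be a union of leaves, and the embedding into $\mathbb{R}^{n+1}$ together with compactness pins down the topology so that the leaf space is $[0,1]$ rather than $S^1$), the foliation is a fiber bundle $\pi:M\to[0,1]$ with fibers the leaves. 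Then $\psi := \pi$ is the desired function: it is smooth, its level sets are exactly the leaves, and $d\psi$ is nowhere-vanishing because $\pi$ is a submersion. The transitivity hypothesis is what guarantees the leaf space is Hausdorff and each leaf has trivial holonomy, so that the local Reeb-stability neighborhoods $L\times(-\varepsilon,\varepsilon)$ glue into a genuine bundle; without it one could have leaves with nontrivial holonomy accumulating badly.

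The main obstacle I expect is the passage from ``all leaves compact plus transitive'' to ``$M$ fibers over $[0,1]$'' — i.e. correctly ruling out the $S^1$ case and establishing that the leaf space is a manifold with boundary $[0,1]$ rather than merely an orbifold or non-Hausdorff space. Reeb's global stability theorem handles the case where some leaf has finite fundamental group, but in general one needs the Epstein--Millett--Tischler circle of ideas on foliations by compact leaves; the cleanest route under the transitivity assumption is to show directly that the holonomy of every leaf is trivial (a transitive foliation has homogeneous leaf space in an appropriate sense, forcing uniform holonomy, and uniform finite holonomy on a compact manifold with a boundary leaf of trivial holonomy must be trivial), after which the standard suspension/Ehresmann argument produces the fibration. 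Once triviality of holonomy is in hand, everything else is routine: local product neighborhoods patch, the leaf space inherits a smooth $1$-manifold-with-boundary structure, and it is a compact connected such manifold containing the two boundary leaves, hence diffeomorphic to $[0,1]$, giving $\psi$.

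Finally I would remark that for the application in Theorem \ref{global_admissibility_thm} the embedding hypothesis is automatic ($Q\subset\mathbb{R}^3$), the foliation is the one integrating $\text{span}(\bm{u})\oplus\ker\bm{S}$, and transitivity follows from the flow of $\bm{u}$ together with leafwise mobility, so (II.d') indeed implies the existence of the global $\psi$ required by (II.d).
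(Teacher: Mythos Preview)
Your top-level strategy matches the paper's: form the leaf space $M/\mathcal{F}$, show it is a compact connected smooth $1$-manifold with nonempty boundary (hence $[0,1]$), and take $\psi$ to be the quotient submersion. Where you diverge is in the mechanism for proving the leaf space is a manifold. You propose to go through holonomy and Reeb-type stability; the paper instead gives a direct, self-contained argument that each leaf meets an adapted chart in a \emph{single} plaque, and the key step is a nesting lemma proved via the Jordan--Brouwer separation theorem (this is where the embedding in $\mathbb{R}^{n+1}$ is really used). Transitivity in the paper is used only to show the quotient map is open; the ``no holonomy'' fact comes from Jordan--Brouwer plus a pigeonhole-style contradiction, not from transitivity.

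There is a genuine gap in your holonomy route. The transitivity hypothesis here is \emph{within} each leaf: for $x,y$ on the \emph{same} leaf $L$ there is a leaf-preserving diffeomorphism carrying $x$ to $y$. It says nothing about relating distinct leaves, so your claim that transitivity forces ``uniform holonomy'' across leaves, or that it makes the leaf space homogeneous, does not follow. (Within a single leaf, holonomy is already basepoint-independent up to conjugacy, so leafwise transitivity adds nothing to the holonomy picture.) Likewise, Reeb global stability needs a compact leaf with finite $\pi_1$, which you do not have, and you correctly flag that the Epstein--Millett--Tischler circle of ideas would be needed, but you do not actually supply the argument that holonomy is trivial here. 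The paper sidesteps all of this: given two plaques of the same leaf $L$ at heights $a<b$ in a chart, any intermediate leaf $L'$ at height $m\in(a,b)$ separates $\mathbb{R}^{n+1}$ into two components by Jordan--Brouwer; the transverse arc from $a$ to $b$ crosses $L'$ an odd number of times while its endpoints lie on the connected set $L$, forcing a second plaque of $L'$ in $(a,b)$. Iterating produces nested pairs of plaques shrinking to a point, contradicting discreteness of the plaque set of the limit leaf (which follows from compactness, hence embeddedness, of leaves). This elementary separation argument is the idea you are missing; once you have it, the $1$-manifold structure on the leaf space and the submersion $\psi$ follow as you outline.
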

\begin{remark}
The Theorem only applies to foliations with leaves that each have empty boundary. These are the foliations relevant to Theorem \ref{global_admissibility_thm} by properties (I.b) and (I.c).
\end{remark}
\begin{remark}
When $M$ is a toroidal annulus in $\mathbb{R}^{3}$ and the foliation $\mathcal{F}$ is generated by a pair of commuting vector fields that are everywhere linearly-independent it is easy to see that property 2 is satisfied. In fact, one can prove (non-constructively) that every foliation on any manifold with boundary is transitive. 
\end{remark}

Theorem \ref{thm_compact_leaves} applies to the present work as follows. 
\begin{theorem}[global admissibility (alternate)]\label{global_admissibility_thm_alt}
Assume $Q$ is diffeomorphic to the toroidal annulus $S^1\times S^1\times [0,1]$, where $S^1 = \mathbb{R}/(2\pi\mathbb{Z})$ denotes the circle. Properties (I) and (II) for a vector field $\bm{u}$ are equivalent.
\begin{itemize}
\item[(I.a)] $\bm{u}$ is admissible and non-Killing with corresponding quasisymmetric magnetic field $\bm{B}:Q\rightarrow\mathbb{R}^3$.
\item[(I.b)] $\bm{u}\times \bm{B}$ is nowhere-vanishing in $Q$.
\item[(I.c)] Both $\bm{u}$ and $\bm{B}$ are tangent to $\partial Q$.
\end{itemize}

\begin{itemize}
\item[(II.a)] $\bm{u}$ is non-Killing and satisfies Eqs.\,\eqref{adm_nec_1}-\eqref{adm_nec_3}.
\item[(II.b)] $(1-\bm{\pi}_\parallel)\cdot \bm{u}$ is nowhere-vanishing in $Q$.
\item[(II.c)] Both $\bm{u}$ and $\text{\normalfont im}\,\bm{\pi}_\parallel$ are tangent to $\partial Q$.
\item[(II.d')] The leaves of the foliation integrating $\text{\normalfont span}(\bm{u})\oplus \text{ker}\,\bm{S}$ are each compact.
\end{itemize}
%
%
%
\end{theorem}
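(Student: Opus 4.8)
The plan is to deduce Theorem~\ref{global_admissibility_thm_alt} from Theorem~\ref{global_admissibility_thm} together with Theorem~\ref{thm_compact_leaves}. Since Theorem~\ref{global_admissibility_thm} already shows that (I.a)--(I.c) is equivalent to the conjunction of (II.a), (II.b), (II.c), (II.d), it suffices to prove that, under the standing hypotheses (II.a)--(II.c), property (II.d) is equivalent to property (II.d$'$). First I would set up the relevant foliation. The non-Killing assumption in (II.a) forces $\bm{S}$ to have eigenvalues $\{0,\lambda,-\lambda\}$ with $\lambda$ nowhere-vanishing, so $\ker\bm{S}$ is a globally-defined line bundle and $\text{im}\,\bm{\pi}_\parallel=\ker\bm{S}$; by (II.b), $\bm{u}$ is nowhere in $\ker\bm{S}$, so $D:=\text{span}(\bm{u})\oplus\ker\bm{S}$ is a genuine rank-$2$ distribution on $Q$. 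As recorded in the discussion around Eq.\,\eqref{adm_nec_3}, that constraint is equivalent to $[\bm{e}_3,\bm{u}]=0$ for a local unit null eigenvector $\bm{e}_3$ of $\bm{S}$; hence $D$ is locally spanned by the commuting fields $\bm{u},\bm{e}_3$, is involutive, and by Frobenius integrates to a codimension-$1$ foliation $\mathcal{F}$ of $Q$. By (II.c), $D$ is tangent to $\partial Q$, so each of the two boundary tori is a union of leaves; but within such a torus any leaf is open (a full-rank integral submanifold) and its complement is again a union of leaves, hence open, so each leaf there is clopen, forcing each boundary torus to be a single leaf of $\mathcal{F}$.

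Next, the easy implication (II.d) $\Rightarrow$ (II.d$'$): if $\psi$ is as in (II.d) then $\bm{u}\cdot\nabla\psi=0$ and $\bm{\pi}_\parallel\cdot\nabla\psi=0$ give $D\subseteq\ker d\psi$, and both are rank $2$, so $D=\ker d\psi$; thus the leaves of $\mathcal{F}$ are exactly the connected components of the level sets of $\psi$, which are closed embedded submanifolds of the compact $Q$ (as $d\psi$ is nowhere-vanishing) and hence compact, i.e.\ (II.d$'$). Applied to the $\psi$ produced by Theorem~\ref{global_admissibility_thm}, the same observation shows directly that (I) implies (II.d$'$). For the converse implication (II.d$'$) $\Rightarrow$ (II.d) I would invoke Theorem~\ref{thm_compact_leaves} with $M=Q$ and $n=2$: $Q$ is a compact connected $3$-manifold with boundary that embeds in $\mathbb{R}^3$; $\mathcal{F}$ is codimension-$1$ with all leaves compact (by (II.d$'$), including the boundary leaves, so all leaves have empty boundary as that theorem requires); and $\mathcal{F}$ is transitive, which one checks by the standard flow-box argument --- any two points of a (connected) leaf are joined by finitely many time-$t$ flows of vector fields tangent to $D$, each taken compactly supported inside a foliation chart and extended by zero to a leaf-preserving diffeomorphism of $Q$ --- or simply by the general transitivity fact noted in the remarks following Theorem~\ref{thm_compact_leaves}. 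Theorem~\ref{thm_compact_leaves} then yields $\psi:Q\to\mathbb{R}$ with $d\psi$ nowhere-vanishing whose level sets are precisely the leaves of $\mathcal{F}$.

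It remains to read off (II.d) from this $\psi$. Because $\bm{u}$ and $\text{im}\,\bm{\pi}_\parallel=\ker\bm{S}$ both lie in $D$, which is tangent to the level sets of $\psi$, we get $\bm{u}\cdot\nabla\psi=0$ and $\bm{\pi}_\parallel\cdot\nabla\psi=0$; and since each boundary torus is a single leaf, $\psi$ is constant on $\partial Q$. This is exactly (II.d), closing the equivalence and hence Theorem~\ref{global_admissibility_thm_alt}. The one genuinely substantive ingredient is Theorem~\ref{thm_compact_leaves} itself (whose own proof would pass through transverse orientability of a compact-leaved transitive foliation and identification of the leaf space with a closed interval), which we are permitted to assume; within the present reduction the only point demanding care is verifying that $\mathcal{F}$ really is a codimension-$1$ foliation satisfying all of the hypotheses of Theorem~\ref{thm_compact_leaves}, in particular that the boundary components occur as individual leaves with empty boundary. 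Transitivity is routine, so I anticipate no further obstacle.
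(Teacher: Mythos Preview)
Your proposal is correct and follows essentially the same route as the paper: reduce to Theorem~\ref{global_admissibility_thm} by proving (II.d) $\Leftrightarrow$ (II.d$'$), where the forward direction identifies leaves with (components of) level sets of $\psi$ and the backward direction invokes Theorem~\ref{thm_compact_leaves}. The paper organizes the argument slightly differently (proving (I) $\Rightarrow$ (II.d$'$) directly via the $\psi$ from Theorem~\ref{global_admissibility_thm}, then (II.d$'$) $\Rightarrow$ (II.d) via Theorem~\ref{thm_compact_leaves}), but the substance is identical; your clopen argument for why the boundary tori are single leaves is a nice explicit verification of the empty-boundary hypothesis that the paper leaves implicit.
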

\begin{proof}
For our spatial domain $Q$ we have considered regions in $\mathbb{R}^{(n+1)}$, $n=2$,  diffeomorphic to the toroidal annulus $S^1\times S^1\times [0,1]$. The space $M\equiv Q$ is therefore a compact connected $(n+1)$-manifold that embeds in $\mathbb{R}^{n+1}$.

If $\bm{B}$ is quasisymmetric with non-Killing infinitesmial generator $\bm{u}$, $\bm{u}\times\bm{B}$ is nowhere-vanishing, and both $\bm{u}$ and $\bm{B}$ are tangent to $\partial Q = \partial M$, then Theorem \ref{global_admissibility_thm} implies property (II) from the statement of Theorem \ref{global_admissibility_thm}. In particular, there is a function $\psi$ with $\nabla\psi$ nowhere-vanishing such that $\bm{u}\cdot\nabla\psi =0$ and $\bm{e}_3 \cdot\nabla \psi = 0$, where $\bm{e}_3 = \bm{b}$. The level sets of $\psi$ are compact connected embedded submanifolds diffeomorphic to the $2$-torus $S^1\times S^1$. If $\Lambda$ is one such level set then $\bm{u}\cdot\nabla\psi = \bm{e}_3\cdot\nabla\psi = 0$ implies $\Lambda$ is tangent to the foliation $\mathcal{F} =\text{span}(\bm{u},\bm{e}_3)$. Thus, $\Lambda$ is an integral submanifold for  $\mathcal{F}$.
Let $L$ be any connected integral submanifold that contains $\Lambda$ and choose $p\in L$, $p^\prime\in\Lambda$.
Since $L$ is connected there must be a smooth curve $\gamma:[0,1]\rightarrow L$ with $\gamma(0) = p,\gamma(1) = p^\prime$. The velocity of $\gamma$, $\gamma^\prime(t)$, is contained in $\text{span}(\bm{u}(\gamma(t)),\bm{e}_3(\gamma(t)))$. The derivative of $\psi$ along $\gamma$ is therefore
\begin{align*}
\frac{d}{dt}\psi(\gamma(t)) = \gamma^\prime(t)\cdot\nabla\psi(\gamma(t)) = 0,
\end{align*}
since $\bm{u}\cdot\nabla\psi = \bm{e}_3\cdot\nabla\psi = 0$. It follows that $\psi(p) = \psi(p^\prime)$. In other words, $p\in\Lambda$. Since $p\in L$ is arbitrary, this implies $L = \Lambda$. It follows that every leaf of the foliation is a level set of $\psi$. Since the level sets of $\psi$ are compact, we conclude that $\bm{u}$ satisfies property $(II.d')$.

Now suppose that $\bm{u}$ satisfies properties (II.a), (II.b), (II.c), and (II.d'). As mentioned in the remark above, the foliation $\mathcal{F} $ that integrates $\text{span}(\bm{u})\oplus\text{ker}\,\bm{S}$ is transitive. By property (II.d') and Theorem \ref{thm_compact_leaves}, there is a smooth function $\psi:M\rightarrow\mathbb{R}$ with $\nabla\psi$ nowhere-vanishing such that $\bm{u}\cdot\nabla\psi = \bm{e}_3\cdot\nabla\psi = 0$, where $\bm{e}_3$ is a globally-defined unit vector field that takes values in $\text{ker}\,\bm{S}$. (The proof of Theorem \ref{global_admissibility_thm} explains why we can assume such an $\bm{e}_3$ exists.) The latter is just property (II.d) from Theorem \ref{global_admissibility_thm}. Therefore $\bm{u}$ is admissible and any quasisymmetric $\bm{B}$ with $\bm{u}$ as its infinitesimal generator has $\bm{u}\times\bm{B}$ nowhere-vanishing and $\bm{u},\bm{B}$ tangent to $\partial M = \partial Q$.
\end{proof}

\begin{remark}
Let $\bm{e}_3$ be a globally-defined unit vector field with $\bm{S}\cdot\bm{e}_3 = 0$. One case in which property (II.d') is clearly satisfied occurs when the integral curves of the vector fields $\bm{u}$ and $\bm{e}_3$ are each periodic. The integral curves of $\bm{u}$ satisfying property (II), and therefore property (I), will be always be periodic by arguments from \cite{BKM_2020}. Since $\bm{b} = \pm \bm{e}_3$, periodic integral curves for $\bm{e}_3$ corresponds to quasisymmetric $\bm{B}$ with all $\bm{B}$-lines closed. 
\end{remark}

Now for a proof of Theorem \ref{thm_compact_leaves}.

First we make a quotient space out of $M$ from $\mathcal{F}$. We say that points $x,y\in M$ are equivalent if they belong to the same leaf. The quotient of $M$ by this equivalence relation is the space of leaves of $\mathcal{F}$. It will be convenient to identify $\mathcal{F}$ with its leaf space. There is therefore a well-defined quotient map $\pi:M\rightarrow\mathcal{F}$ sending each point $x\in M$ to the leaf that contains it. We topologize $\mathcal{F}$  by equipping it with the quotient topology, wherein $U\subset \mathcal{F}$ is open in $\mathcal{F}$ if $\pi^{-1}(U)$ is open in $M$. Note that if $L$ is any leaf then $\pi^{-1}(\{L\}) = L$. When $\mathcal{F}$ is equipped with this topology, $\pi:M\rightarrow\mathcal{F}$ is continuous. Since $M$ is compact and connected and $\pi$ is a continuous surjection, it immediately follows that $\mathcal{F}$ is compact and connected.

We aim to prove that $\mathcal{F}$ is a second-countable Hasudorff topological space that can be given the structure of a smooth connected compact $1$-manifold with non-empty boundary and that $\pi$ is a smooth submersion with respect to this structure. This will prove Theorem \ref{thm_compact_leaves} because every compact connected $1$-manifold with non-empty boundary is diffeomorphic to $[0,1]$. If $\Psi:\mathcal{F}\rightarrow[0,1]$ is such a diffeomorphism then function $\psi$ guaranteed by Theorem \ref{thm_compact_leaves} is
\begin{align*}
\psi = \Psi\circ\pi : M\rightarrow [0,1].
\end{align*}

The proof will proceed by constructing an appropriate atlas of charts on $\mathcal{F}$ out of $\mathcal{F}$-adapted charts on $M$.
\begin{definition}\label{adapted_def}
Given $x\in M$, a coordinate chart $(U,\varphi)$ is $\mathcal{F}$-\textbf{adapted} at $x$ if there is $\epsilon > 0 $ such that
\begin{itemize}
\item[1.] $\varphi(x) = 0$
\item[2.] $\varphi(U) = (-\epsilon,\epsilon)^n\times I$, where \begin{align*}
I= \begin{cases}
[0,\epsilon), & \text{if }x\in\partial M\\
(-\epsilon,\epsilon) & \text{otherwise}
\end{cases}
\end{align*}
\item[3.] For each leaf $L$ that meets $U$ there is a unique subset $\mathcal{C}_L\subset I$ with $\varphi(U\cap L) = (-\epsilon,\epsilon)^n\times\mathcal{C}_L$.
\end{itemize}
\end{definition}
\noindent As we will show, each leaf $L$ enters a given adapted chart $(U,\varphi)$ at most once. We formalize this result by  proving that $\mathcal{C}_L$ is a singleton. There is therefore a well-defined real-valued function $h$ on the set of leaves $L$ that meet $U$ defined according to $\mathcal{C}_L = \{h(L)\}$. The atlas on $\mathcal{F}$ will comprise charts of the form $(\pi(U),h)$. We will first show that these charts are well-defined before demonstrating that the corresponding smooth structure on $\mathcal{F}$ has the desired properties.

In order for $(\pi(U),h)$ to be a chart on $\mathcal{F}$ it must certainly be true that $\pi(U)$ is open in $\mathcal{F}$. We will establish this first result by showing that $\pi$ is an open map. To do so, we first prove an algebraic Lemma.
\begin{lemma}\label{growing_along_leaves}
Let $\text{Diff}_{\mathcal{F}}(M)$ denote the set of diffeomorphisms $F:M\rightarrow M$ such that $F(L) = L$ for each leaf $L\in\mathcal{F}$. If $A\subset M$ is any subset then
\begin{align*}
\pi^{-1}(\pi(A)) = \bigcup\{F(A)\mid F\in\text{Diff}_{\mathcal{F}}(M)\}.
\end{align*}
\end{lemma}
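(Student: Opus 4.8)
The plan is to read both sides as descriptions of the \emph{saturation} of $A$ --- the union of all leaves that meet $A$ --- and to match them using the transitivity hypothesis on $\mathcal{F}$ (property 2 of Theorem \ref{thm_compact_leaves}). I would first record the structural fact that $\text{Diff}_{\mathcal{F}}(M)$ is a group under composition: the identity preserves every leaf, a composition of leaf-preserving diffeomorphisms is leaf-preserving, and if $F(L)=L$ for every leaf $L$ then $F^{-1}(L)=L$ as well, since $F$ permutes the leaves bijectively. Hence $\text{Diff}_{\mathcal{F}}(M)$ acts on $M$ with every orbit contained in a single leaf, and the lemma is really the assertion that these orbits are exactly the leaves.

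For the inclusion ``$\supseteq$'', take $F\in\text{Diff}_{\mathcal{F}}(M)$ and $a\in A$, and let $L$ be the leaf through $a$. Then $F(a)\in F(L)=L$, so $\pi(F(a))=\pi(a)\in\pi(A)$, hence $F(a)\in\pi^{-1}(\pi(A))$; taking the union over $F$ and $a\in A$ gives this inclusion. For the inclusion ``$\subseteq$'', take $y\in\pi^{-1}(\pi(A))$, so $\pi(y)\in\pi(A)$ and there is $a\in A$ with $\pi(a)=\pi(y)$. Writing $L$ for this common leaf, we have $a,y\in L$, and transitivity of $\mathcal{F}$ furnishes $F\in\text{Diff}_{\mathcal{F}}(M)$ with $F(a)=y$; then $y\in F(A)$, which is the reverse inclusion.

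The entire content of the lemma is carried by the transitivity hypothesis, so there is no real obstacle beyond invoking it correctly and keeping the set-theoretic bookkeeping clean. The one subtlety to watch is that transitivity must be applied \emph{within} a single leaf --- property 2 moves an arbitrary point of a leaf to any other point of the \emph{same} leaf --- which is exactly the form in which it is stated; without this hypothesis the right-hand side could be strictly smaller than the left. I would also note explicitly that $A$ is permitted to be an arbitrary subset, since the argument uses no regularity of $A$ whatsoever.
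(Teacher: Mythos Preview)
Your proof is correct and follows essentially the same double-inclusion argument as the paper: one inclusion uses that leaf-preserving diffeomorphisms fix $\pi$, and the other invokes transitivity to move a point of $A$ to any prescribed point of the same leaf. The additional remarks about the group structure of $\text{Diff}_{\mathcal{F}}(M)$ and the interpretation as saturation are fine commentary but not needed for the argument.
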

\begin{proof}
Let $S_1 = \pi^{-1}(\pi(A))$ and $S_2 = \bigcup\{F(A)\mid F\in\text{Diff}_{\mathcal{F}}(M)\}$. We will show $S_1\subset S_2$ and $S_2\subset S_1$.

If $y\in S_2$ there is some $F\in\text{Diff}_{\mathcal{F}}(M)$ such that $y\in F(A)$. Choose $x\in A$ such that $F(x) = y$. Since $F$ preserves every leaf in $\mathcal{F}$, $\pi(y) = \pi(F(x)) = \pi(x)\in\pi(A)$. This shows $S_2\subset S_1$.

If $y\in S_1$ there is some $x\in A$ such that $\pi(y) = \pi(x)$. Because $\mathcal{F}$ is transitive there is $F\in\text{Diff}_{\mathcal{F}}(M)$ with $y = F(x)\in F(A)\subset S_2$. Thus, $S_1\subset S_2$.
\end{proof}

\begin{lemma}\label{open_map_lemma}
The map $\pi:M\rightarrow\mathcal{F}$ is open.
\end{lemma}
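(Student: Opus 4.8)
The plan is to deduce openness of $\pi$ directly from the algebraic description of saturated sets given in Lemma \ref{growing_along_leaves}, combined with the defining property of the quotient topology on $\mathcal{F}$. No new geometric input is needed; the argument is purely formal once that Lemma is available.

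First I would recall that, by construction of the quotient topology, a subset $V\subseteq\mathcal{F}$ is open if and only if $\pi^{-1}(V)$ is open in $M$. Hence, to prove that $\pi$ is open, it suffices to show that for every open set $A\subseteq M$ the saturation $\pi^{-1}(\pi(A))$ is open in $M$.

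Next I would invoke Lemma \ref{growing_along_leaves}, which rewrites this saturation as
\[
\pi^{-1}(\pi(A)) = \bigcup\{F(A)\mid F\in\text{Diff}_{\mathcal{F}}(M)\}.
\]
Each $F\in\text{Diff}_{\mathcal{F}}(M)$ is in particular a homeomorphism of $M$, so $F(A)$ is open whenever $A$ is. Since an arbitrary union of open sets is open, $\pi^{-1}(\pi(A))$ is open, and therefore $\pi(A)$ is open in $\mathcal{F}$. This is exactly the claim.

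The one point that carries real content — and it has already been discharged in the proof of Lemma \ref{growing_along_leaves} — is the transitivity hypothesis on $\mathcal{F}$: it is transitivity that forces the $\text{Diff}_{\mathcal{F}}(M)$-orbit of each point to be its entire leaf, so that the union above genuinely exhausts the full saturation of $A$ rather than a proper subset of it. With that Lemma in hand there is no remaining obstacle, and the proof is short.
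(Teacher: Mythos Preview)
Your proof is correct and follows essentially the same approach as the paper: reduce openness of $\pi$ to openness of the saturation $\pi^{-1}(\pi(A))$ via the quotient topology, then apply Lemma \ref{growing_along_leaves} to write the saturation as a union of diffeomorphic images of $A$. The paper's version is terser but identical in content.
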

\begin{proof}
We must show that the image $\pi(U)$ of any open $U\subset M$ is open in $\mathcal{F}$. By definition of the quotient topology on $\mathcal{F}$, this is equivalent to showing that $\pi^{-1}(\pi(U))$ is open in $M$. But by Lemma \ref{growing_along_leaves}, $\pi^{-1}(\pi(U)) = \bigcup \{F(U)\mid F\in\text{Diff}_{\mathcal{F}}(M)\}$ is a union of the open sets $F(U)$.
\end{proof}

In order for $(\pi(U),h)$ to be a chart on $\mathcal{F}$ it must also be true that $h:\pi(U)\rightarrow \mathbb{R}$ is a well-defined homeomorphism onto its image. The following technical results provide a framework for rigorously establishing this property. 

Fix an $\mathcal{F}$-adapted chart $(U,\varphi)$ on $M$. Define $\mathcal{F}_U = \pi(U)$. For $L\in\mathcal{F}_U$ let $\mathcal{C}_L\subset\mathbb{R}$ denote the set such that $\varphi(U\cap L) = (-\epsilon,\epsilon)^n\times\mathcal{C}_L$, cf. Definition \ref{adapted_def}. The following argument builds to a proof that $\mathcal{C}_L$ is a singleton. First we show that $\mathcal{C}_L$ is discrete.
\begin{lemma}\label{discrete_lemma}
For each $L\in\mathcal{F}_U$ the set $\mathcal{C}_L$ is discrete.
\end{lemma}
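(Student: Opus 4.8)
The plan is to isolate the one hypothesis that distinguishes the present situation from a generic codimension-$1$ foliation (where a leaf can meet a transversal in, e.g., a dense set): compactness of the leaf $L$. The key observation is that a \emph{compact} leaf is not merely injectively immersed but genuinely embedded, so its intrinsic (leaf) topology agrees with the subspace topology inherited from $M$. Granting this, each plaque of $(U,\varphi)$ lying in $L$ is open for the subspace topology, and discreteness of $\mathcal{C}_L$ is immediate.

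In detail: fix $c_0\in\mathcal{C}_L$, let $P_{c_0}=\varphi^{-1}((-\epsilon,\epsilon)^n\times\{c_0\})$ be the associated plaque, which is contained in $L$ and is open in the leaf topology of $L$, and pick the point $q_0=\varphi^{-1}(0,c_0)\in U\cap L$ (writing $0$ for the origin of $(-\epsilon,\epsilon)^n$; this point lies in $L$ because $\varphi(U\cap L)=(-\epsilon,\epsilon)^n\times\mathcal{C}_L$). Since $L$ is a leaf of $\mathcal{F}$ it is an injectively immersed connected submanifold of $M$; as $L$ is compact and $M$ is Hausdorff, the inclusion $L\hookrightarrow M$ is a continuous injection from a compact space into a Hausdorff space, hence a closed map and a homeomorphism onto its image, so the leaf topology on $L$ coincides with the subspace topology. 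Consequently there is an open set $W\subset M$ with $W\cap L=P_{c_0}$. Now suppose $c_0$ were not isolated in $\mathcal{C}_L$: choose $c_k\in\mathcal{C}_L\setminus\{c_0\}$ with $c_k\to c_0$ and set $q_k=\varphi^{-1}(0,c_k)\in L$. Then $q_k\to q_0$ in $M$ by continuity of $\varphi^{-1}$, so $q_k\in W$ for all large $k$; but $q_k\in L$, whence $q_k\in W\cap L=P_{c_0}$, which forces $\varphi(q_k)\in(-\epsilon,\epsilon)^n\times\{c_0\}$ and therefore $c_k=c_0$, a contradiction. Hence every point of $\mathcal{C}_L$ is isolated in $\mathcal{C}_L$, i.e. $\mathcal{C}_L$ is discrete. (The same kind of argument, using that $L$ is closed in the compact manifold $M$, also shows $\mathcal{C}_L$ is closed in $I$; this is not needed here but is convenient for the later proof that $\mathcal{C}_L$ is a singleton.)

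The only point calling for care — the ``main obstacle'' — is the justification that a compact leaf of $\mathcal{F}$ is embedded, which rests on recalling that leaves are a priori only injectively immersed and then invoking the elementary compact-to-Hausdorff embedding lemma. It is tempting to reach instead for foliation-structure machinery (holonomy, local Reeb stability, Epstein's theorem on compact foliations), but that is unnecessary: the compactness-forces-embedding step is self-contained, and it is exactly where hypothesis $1$ of Theorem \ref{thm_compact_leaves} enters, hypothesis $2$ (transitivity) having already been spent in Lemma \ref{open_map_lemma}.
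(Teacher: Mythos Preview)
Your proof is correct and follows essentially the same approach as the paper's: both hinge on the observation that a compact leaf is embedded (the paper cites Lee's Proposition~5.21, you supply the elementary compact-to-Hausdorff argument), so that the plaque $P_{c_0}$ is open in the subspace topology of $L$, from which discreteness of $\mathcal{C}_L$ follows immediately. The only cosmetic difference is that the paper finishes by working in the chart image $(-\epsilon,\epsilon)^n\times I$ with an open-cube argument while you remain in $M$ and use a sequence; neither choice affects the substance.
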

\begin{proof}
Recall that we assume all leaves in $\mathcal{F}$ are compact. Also recall that each leaf of a foliation is an immersed submanifold. Consequently, $L$ is a compact immersed submanifold of $M$. By proposition 5.21 in \cite{Lee_2003}, it follows that $L$ is an embedded submanifold of $M$.

Since $U\subset M$ is open, $L\cap U$ is an embedded submanifold of $U$ with dimension $n$. Since $\varphi$ is a diffeomorphism, $\varphi(L\cap U) = (-\epsilon,\epsilon)^n\times\mathcal{C}_L$ is an embedded submanifold of $\varphi(U)$ with dimension $n$. For $c\in\mathcal{C}_L$, $(-\epsilon,\epsilon)^n\times\{c\}$ is also an embedded submanifold of $\varphi(U)$ with dimension $n$. Clearly, $(-\epsilon,\epsilon)^n\times\{c\}\subset (-\epsilon,\epsilon)^n\times\mathcal{C}_L$. So $(-\epsilon,\epsilon)^n\times\{c\}$ is an open subset of $(-\epsilon,\epsilon)^n\times\mathcal{C}_L$.

Since $(-\epsilon,\epsilon)^n\times\mathcal{C}_L$ is an embedded submanifold of $\varphi(U)$, its open sets are restrictions of open subsets of $\varphi(U)$. Since $(-\epsilon,\epsilon)^n\times\{c\}$ is open in $(-\epsilon,\epsilon)^n\times\mathcal{C}_L$ it follows that there is an open set $V\subset\varphi(U)$ such that 
\begin{align*}
(-\epsilon,\epsilon)^n\times\{c\} = \bigg((-\epsilon,\epsilon)^n\times\mathcal{C}_L\bigg)\cap V.
\end{align*}
Choose $\delta >0$ small enough so that $R_\delta(c) = (0,c) + (-\delta,\delta)^{n+1}\subset V$. We have
\begin{align*}
\bigg((-\epsilon,\epsilon)^n\times\mathcal{C}_L\bigg)\cap R_\delta(c)\subset  \bigg((-\epsilon,\epsilon)^n\times\mathcal{C}_L\bigg)\cap V = (-\epsilon,\epsilon)^n\times\{c\},
\end{align*}
and also
\begin{align*}
\bigg((-\epsilon,\epsilon)^n\times\mathcal{C}_L\bigg)\cap R_\delta(c) = (-\delta,\delta)^{n}\times \bigg(\mathcal{C}_L\cap (-\delta + c, c + \delta) \bigg).
\end{align*}
So $\mathcal{C}_L\cap (-\delta + c, c + \delta) = \{c\}$. In other words the $c\in\mathcal{C}_L$ are isolated by open intervals, as claimed.
\end{proof}

Next we show leaves that enter an adapted chart at least twice must satisfy a nesting property that will ultimately prove to be incompatible with discreteness of $\mathcal{C}_L$.
\begin{lemma}\label{nesting_lemma}
Suppose $L\in\mathcal{F}_U$ and there exists $a,b\in\mathcal{C}_L$ with $a <  b$. 
\begin{itemize}
\item[(a)] For each $a< m < b$ with $m\not\in \mathcal{C}_L$ there is a leaf $L^\prime$ and an $a<\tilde{m} < b$, distinct from $m$, such that $m,\tilde{m}\in \mathcal{C}_{L^\prime}$. 
\item[(b)] There is an $L^\prime\in\mathcal{F}_U$, distinct from $L$, and $a^\prime,b^\prime\in\mathcal{C}_{L^\prime}$ such that $a < a^\prime < b^\prime < b$ and $(b^\prime - a^\prime) < 3(b-a)/4$.
\end{itemize}
\end{lemma}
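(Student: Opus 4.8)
The plan is to prove part (a) by a Jordan--Brouwer separation argument applied to the (unique) leaf $L_m$ through the point $p_m := \varphi^{-1}(0,m)$, and then to read off part (b) from (a) by a one-line estimate. Throughout, let $\tau:[a,b]\to M\subset\mathbb{R}^{n+1}$ be the transversal arc $\tau(s) = \varphi^{-1}(0,s)$. By Definition \ref{adapted_def}(3), for any leaf $L'$ meeting $U$ one has $\tau(s)\in L'$ iff $s\in\mathcal{C}_{L'}$, and $\tau$ is transverse to $\mathcal{F}$ (hence to every leaf) since in the chart it is a coordinate segment meeting each slice $(-\epsilon,\epsilon)^n\times\{c\}$ transversally. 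A first observation is that $a,b,m$ are interior points of $I$: if the chart were a boundary chart ($I=[0,\epsilon)$) with, say, $a=0$, then $\tau(0)\in\partial M$ would lie on $L$, forcing $L\subset\partial M$ (the leaves are boundaryless, so the foliation is tangent to $\partial M$ and $\partial M$ is a union of leaves), whence $\mathcal{C}_L=\{0\}$, contradicting $a<b$ in $\mathcal{C}_L$. Thus $[a,b]\subset\mathrm{int}(I)$, $p_m\in\mathrm{int}(M)$, so $L_m\cap\partial M=\emptyset$; also $L_m\neq L$ because $m\notin\mathcal{C}_L$.

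Next I would invoke, exactly as in the proof of Lemma \ref{discrete_lemma}, that the compact leaves $L$ and $L_m$ are embedded submanifolds of $M$ (a compact injectively immersed submanifold is embedded). Hence $L_m$ is a smooth, compact, connected, embedded hypersurface contained in $\mathrm{int}(M)\subset\mathbb{R}^{n+1}$, so by the Jordan--Brouwer separation theorem $\mathbb{R}^{n+1}\setminus L_m$ has exactly two connected components, each with boundary $L_m$. Since $L$ is connected and disjoint from $L_m$ (distinct leaves are disjoint), all of $L$ --- in particular the endpoints $\tau(a),\tau(b)$ of the arc --- lies in a single one of these two components. The arc $\tau$ therefore joins two points of one complementary component of $L_m$ while being transverse to $L_m$, and its intersections with $L_m$ occur precisely at the parameters in $\mathcal{C}_{L_m}\cap(a,b)$; these are finitely many (discreteness of $\mathcal{C}_{L_m}$, Lemma \ref{discrete_lemma}, plus compactness of $[a,b]\subset\mathrm{int}(I)$), and the endpoints $a,b$ are excluded because $a,b\in\mathcal{C}_L$ while $\mathcal{C}_L\cap\mathcal{C}_{L_m}=\emptyset$. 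A path between two points of the same complementary component of a separating hypersurface meets it transversally an even number of times; since $m$ is one such intersection there is at least one more, i.e.\ there exists $\tilde m\in(\mathcal{C}_{L_m}\cap(a,b))\setminus\{m\}$. Taking $L'=L_m$ (which contains $m$ since $p_m\in L_m\cap U$) proves (a).

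For part (b), use that $\mathcal{C}_L$ is discrete, hence has empty interior, to choose $m\in\big(\tfrac{a+b}{2}-\tfrac{b-a}{8},\ \tfrac{a+b}{2}+\tfrac{b-a}{8}\big)\setminus\mathcal{C}_L$. Then $m\in(a,b)$ and $m\notin\mathcal{C}_L$, so part (a) yields a leaf $L'=L_m\neq L$ meeting $U$ with $m,\tilde m\in\mathcal{C}_{L'}$, $\tilde m\in(a,b)$, $\tilde m\neq m$. Set $a'=\min\{m,\tilde m\}$ and $b'=\max\{m,\tilde m\}$, so $a<a'<b'<b$. Since $\tilde m\in(a,b)$ one has $b'-a'=|m-\tilde m|<\max\{m-a,\ b-m\}<\tfrac{5}{8}(b-a)<\tfrac{3}{4}(b-a)$, as required.

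The main obstacle is setting up part (a) correctly: one must check that $a,b,m$ are interior levels (so that $\partial M$ plays no role and $L_m$ is a genuine closed hypersurface of $\mathbb{R}^{n+1}$), that the crossings of $\tau$ with $L_m$ are finite and avoid the endpoints, and that the mod-$2$/parity intersection bookkeeping is legitimate; once the separation picture is in place, the rest, including part (b), is routine. One could alternatively attempt part (a) via the leaf-preserving diffeomorphisms supplied by transitivity, transporting the slice of $L$ at level $a$ onto the slice at level $b$; but making such an argument cover every $m\in(a,b)$, rather than only $m$ near an endpoint, appears more delicate than the separation argument above.
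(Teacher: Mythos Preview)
Your proposal is correct and follows essentially the same line as the paper's proof: for part (a) you take the transversal arc $\tau(s)=\varphi^{-1}(0,s)$, invoke Jordan--Brouwer separation for the compact embedded leaf through $p_m$, and use that the endpoints $\tau(a),\tau(b)\in L$ lie in a single complementary component to force a second crossing; for part (b) you pick $m$ near the midpoint and bound the gap. The only cosmetic differences are that the paper phrases the separation step as a sign-change argument via a defining function $f$ (assuming for contradiction that $m$ is the sole crossing) rather than as a mod-$2$ intersection count, and that you are more explicit about why $[a,b]\subset\mathrm{int}(I)$ and why the crossings are finite; your constant $5/8$ is also slightly sharper than the paper's $3/4$, though neither matters downstream.
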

\begin{proof}
First we will show (a). 
The subset $\varphi^{-1}((-\epsilon,\epsilon)^n\times \{m\})\subset M$ is an integral manifold for the $n$-plane distribution associated with $\mathcal{F}$. Let $L^\prime$ denote the unique leaf in $\mathcal{F}$ (i.e. maximal integral manifold) that contains $\varphi^{-1}((-\epsilon,\epsilon)^n\times \{m\})$. By construction, $L^\prime$ is disjoint from $L$, $L^\prime\in\mathcal{F}_U$, and $m\in \mathcal{C}_{L^\prime}$. Suppose there were not a distinct $\tilde{m}\in (a,b)$ with $\tilde{m}\in\mathcal{C}_{L^\prime}$. Then the curve $\kappa: [a,b]\rightarrow M$ given by $\kappa(t) = \varphi^{-1}(0,t)$, $t\in[a,b]$, would have the following properties. 
\begin{itemize}
\item[(1)] $\kappa(t) \in L^\prime$ if and only if $t = m$. 
\item[(2)] $\kappa^\prime(m)$ is transverse to $L^\prime$.
\item[(3)] $\kappa(a), \kappa(b)\in L$.
\end{itemize} 
But this $\kappa$ cannot exist for the following reason. Embed $L$ and $L^\prime$ in $\mathbb{R}^{n+1}$. By the Jordan-Brower separation Theorem, the set $\mathbb{R}^{n+1} - L^\prime$ is the disjoint union of two connected open sets $C_+,C_-$. Moreover, there is a smooth function $f:\mathbb{R}^{n+1}\rightarrow\mathbb{R}$ such that $0$ is a regular value for $f$, $f^{-1}(\{0\}) = L^\prime$, $f\mid C_+ > 0$, and $f\mid C_- < 0$. Since $\kappa$ transversally intersects $L^\prime$ when $t=m$ the continuous function $F:[a,b]\rightarrow \mathbb{R}:t\mapsto f(\kappa(t))$ changes sign at $t = m$. Since $\kappa$'s intersection with $L^\prime$ is unique the restricted curves $\kappa\mid [a,m)$, $\kappa\mid (m,b]$ do not intersect $L^\prime$. This implies $F\mid [a,m)$ and $F\mid (m,b]$ have distinct constant signs. In particular, the signs of $F(a)$ and $F(b)$ differ. Therefore $\kappa(a),\kappa(b)\in L$ must lie in distinct connected components of $\mathbb{R}^{n+1} - L^\prime$. This is impossible because (I) connectedness of $L$ implies there is continuous curve $c:[0,1]\rightarrow L$ with $c(0) = \kappa(a)$, $c(1) = \kappa(b)$, (II) the value of $f\circ c$ must change sign, and (III) the intermediate value theorem implies there is some $\lambda\in[0,1]$ with $f(c(\lambda)) = 0$, which says $c(\lambda)\in L\cap L^\prime = \emptyset$. Thus, there must be $\tilde{m}\in (a,b)$, distinct from $m$, with $\tilde{m}\in \mathcal{C}_{L^\prime}$.

Now we show (b). Set $m_0 = (a + b)/2$. If $m_0\in\mathcal{C}_L$ use discreteness of $\mathcal{C}_L$ to find a nearby $m_0^\prime\not\in \mathcal{C}_L$ with $|m_0 - m_0^\prime| < (b-a)/4$ and set $m = m_0^\prime$. Otherwise set $m = m_0$. Note that $m\in(a,b) - \mathcal{C}_L$ and that $m$ is at least a distance $(b-a)/4$ from either endpoint $a,b$. Part (a) therefore implies there is an $a<\tilde{m}<b$, distinct from $m$, and a leaf $L^\prime$ such that $m,\tilde{m}\in\mathcal{C}_{L^\prime}$. By reflecting the interval $(a,b)$ about the midpoint $m_0$ if necessary, we may assume $\tilde{m} > m$. Let $a^\prime = m$ and $b^\prime = \tilde{m}$. Since $L^\prime$ is disjoint from $L$ and $m,\tilde{m}\in\mathcal{C}_{L^\prime}$, we have $a < a^\prime < b^\prime < b$. Since $m > m_0 - (b-a)/4$ and $m<\tilde{m} < b$  we also have
\begin{align*}
b^\prime - a^\prime = \tilde{m} - m < b - \bigg(m_0 - \frac{b-a}{4}\bigg) = \frac{3}{4}(b-a).
\end{align*}
\end{proof}

We may now show that $\mathcal{C}_L$ is always a singleton.
\begin{lemma}\label{singleton_lemma}
If $(U,\varphi)$ is an $\mathcal{F}$-adapted chart and $L\in\mathcal{F}_U$ then $\mathcal{C}_L$ is a singleton.
\end{lemma}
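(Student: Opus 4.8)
The plan is to argue by contradiction, using Lemma \ref{nesting_lemma}(b) as a self-improvement device. Suppose $\mathcal{C}_L$ is not a singleton; it is nonempty because $L\in\mathcal{F}_U$ means $L$ meets $U$, so it contains two points $a_0<b_0$. Setting $L_0=L$, I would apply Lemma \ref{nesting_lemma}(b) repeatedly: given $L_k\in\mathcal{F}_U$ with $a_k<b_k$ in $\mathcal{C}_{L_k}$, it yields $L_{k+1}\in\mathcal{F}_U$ and $a_{k+1},b_{k+1}\in\mathcal{C}_{L_{k+1}}$ with $a_k<a_{k+1}<b_{k+1}<b_k$ and $b_{k+1}-a_{k+1}<\tfrac{3}{4}(b_k-a_k)$. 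Then $(a_k)$ is strictly increasing and bounded above, $(b_k)$ is strictly decreasing and bounded below, and the gap decays geometrically to $0$, so $a_k\to c$ and $b_k\to c$ for a common limit $c$ with $a_k<c<b_k$ for all $k$. Since $I$ is an interval containing $a_0,b_0$, we get $c\in(a_0,b_0)$, and $c$ is interior to $I$ (if $I=[0,\epsilon)$ then $c>a_0\ge 0$); hence the transversal curve $\kappa(t)=\varphi^{-1}(0,t)$ is defined near $t=c$ and $p_\infty=\kappa(c)$ is an interior point of $M$.

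Next I would locate the leaf $L_\infty\in\mathcal{F}$ through $p_\infty$: it meets $U$, so $L_\infty\in\mathcal{F}_U$ and $c\in\mathcal{C}_{L_\infty}$. As in the proof of Lemma \ref{discrete_lemma}, compactness makes $L_\infty$ an embedded hypersurface of $M$, hence via the embedding $M\hookrightarrow\mathbb{R}^{n+1}$ a compact connected embedded hypersurface of $\mathbb{R}^{n+1}$. Then, exactly as in the proof of Lemma \ref{nesting_lemma}, the Jordan--Brouwer separation theorem supplies a smooth $f:\mathbb{R}^{n+1}\to\mathbb{R}$ with $0$ a regular value, $f^{-1}(\{0\})=L_\infty$, and $f$ of opposite signs on the two complementary components. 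The curve $\kappa$ is transverse to the foliation, so it crosses $L_\infty$ transversally at $t=c$; and by Lemma \ref{discrete_lemma} there is $\delta>0$ with $(c-\delta,c+\delta)\subset I$ and $\mathcal{C}_{L_\infty}\cap(c-\delta,c+\delta)=\{c\}$, so $t\mapsto f(\kappa(t))$ is nonzero off $t=c$ on this interval and changes sign there; I may assume it is negative on $(c-\delta,c)$ and positive on $(c,c+\delta)$.

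Finally, I would pick $k$ large enough that $[a_k,b_k]\subset(c-\delta,c+\delta)$. Because $a_k,b_k\in\mathcal{C}_{L_k}$, the points $\kappa(a_k),\kappa(b_k)$ both lie on the connected leaf $L_k$; joining them by a path in $L_k$ and applying the intermediate value theorem to $f$ along that path yields a point of $L_k\cap L_\infty$. Since distinct leaves of a foliation are disjoint, $L_k=L_\infty$. But then $a_k$ is a point of $\mathcal{C}_{L_\infty}$ lying in $(c-\delta,c+\delta)$ and different from $c$ (as $a_k<c$), contradicting the choice of $\delta$, i.e.\ the discreteness of $\mathcal{C}_{L_\infty}$ (Lemma \ref{discrete_lemma}). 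Hence $\mathcal{C}_L$ is a singleton. The main obstacle I anticipate is arranging the iteration so that the shrinking intervals stay nested inside $(a_0,b_0)$ and the limit $c$ remains interior to $I$ --- this is what guarantees that $c$ actually determines a leaf $L_\infty$ of $\mathcal{F}$ meeting $U$; after that, the argument is a reuse of the Jordan--Brouwer input already developed for Lemma \ref{nesting_lemma}, now run in reverse to force $L_k=L_\infty$ and collide with discreteness.
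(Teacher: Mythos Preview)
Your proposal is correct and follows essentially the same route as the paper: iterate Lemma~\ref{nesting_lemma}(b) to produce nested intervals shrinking to a common limit $c$, pass to the leaf $L_\infty$ through $\varphi^{-1}(0,c)$, and combine discreteness of $\mathcal{C}_{L_\infty}$ with the Jordan--Brouwer separation argument to reach a contradiction. The only cosmetic difference is that the paper packages the final step as an application of Lemma~\ref{nesting_lemma}(a) (producing a sequence $\ell_n\in\mathcal{C}_{L_\infty}$ accumulating at $c$), whereas you rerun the separation argument inline for a single large $k$ to force $L_k=L_\infty$ directly; both derive the same contradiction with discreteness.
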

\begin{proof}
Suppose not. Then there is a leaf $L_0\in\mathcal{F}_U$ and $a_0,b_0\in\mathcal{C}_{L_0}$ with $a_0 < b_0$. Applying Lemma \ref{nesting_lemma}, there is a second leaf $L_1$, distinct from $L_0$, and $a_1,b_1\in\mathcal{C}_{L_1}$ with $a_0<a_1<b_1<b_0$ and $b_1 - a_1 < 3 (b_0 - a_0) / 4$. Continuing inductively, there are sequences $\{a_n\}_{n\in\mathbb{N}}$, $\{b_n\}_{n\in\mathbb{N}}$, and a sequence of leaves $\{L_n\}_{n\in\mathbb{N}}$ such that for all $n\in\mathbb{N}$
\begin{gather*}
a_n,b_n\in\mathcal{C}_{L_n}\\
a_n < a_{n+1} < b_{n+1} < b_n\\
b_n - a_n < \bigg(\frac{3}{4}\bigg)^{n-1}(b_{0} - a_{0}).
\end{gather*}
Thus, there is some $\ell\in I$ such that $\lim_{n\rightarrow\infty}a_n = \lim_{n\rightarrow\infty}b_n = \ell$. Let $L^*$ denote the unique leaf in $\mathcal{F}_U$ with $\ell \in\mathcal{C}_{L^*}$.

The set $\mathcal{C}_{L^*}$ is discrete by Lemma \ref{discrete_lemma}. Since $a_n\rightarrow \ell$ and $b_n\rightarrow \ell$, there must therefore be some $n_0\in\mathbb{N}$ such that $a_n,b_n\not\in\mathcal{C}_{L^*}$ for all $n > n_0$. This implies, in particular, disjointness of $L_n$ and $L^*$, for all $n > n_0$. As such, Lemma \ref{nesting_lemma} implies there is an $\ell_n\in (a_n,b_n)$, distinct from $\ell$, with $\ell_n\in\mathcal{C}_{L^*}$, for each $n > n_0$. This contradicts discreteness of $\mathcal{C}_{L^*}$ because $\lim_{n\rightarrow\infty}\ell_n = \ell$.

\end{proof}

With the preceding theoretical framework in place, we can now prove that $h:\pi(U)\rightarrow\mathbb{R}$ is a well-defined homeomorphism onto its image. This result establishes that the leaf space $\mathcal{F}$ has the structure of a (possibly non-Hausdorff) connected, compact, topological $1$-manifold with boundary. 
\begin{lemma}\label{homeo_lemma}
Let $(U,\varphi)$ be an $\mathcal{F}$-adapted chart. There is a well-defined homeomorphism $h : \pi(U)\rightarrow I$ (c.f. Definition \ref{adapted_def}) such that, for each $L\in\mathcal{F}_U$, $\mathcal{C}_L = \{h(L)\}$.  Moreover, the function $H : \pi^{-1}(\pi(U))\rightarrow I$ given by $H = h\circ\pi$ is smooth, has no critical points, and is constant along each $L\in \pi(U)$.
\end{lemma}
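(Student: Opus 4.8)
\section*{Proof proposal for Lemma \ref{homeo_lemma}}

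The plan is to recognize $h$ as the descent through $\pi$ of the last coordinate of the adapted chart. Write $\varphi = (\varphi^1,\dots,\varphi^{n+1})$ and set $p = \varphi^{n+1}:U\to I$. By Lemma \ref{singleton_lemma}, for each $L\in\mathcal{F}_U = \pi(U)$ the set $\mathcal{C}_L$ is a singleton, so the rule $\mathcal{C}_L = \{h(L)\}$ defines $h:\pi(U)\to I$ unambiguously; this is the well-definedness claim. The one identity driving everything is
\begin{align*}
h\circ(\pi|_U) = p \quad\text{on }U,
\end{align*}
which holds because for $x\in U$ one has $\varphi(x)\in\varphi(U\cap\pi(x)) = (-\epsilon,\epsilon)^n\times\mathcal{C}_{\pi(x)} = (-\epsilon,\epsilon)^n\times\{h(\pi(x))\}$, so the last coordinate $p(x)$ equals $h(\pi(x))$.

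From this identity I would extract the topological assertions. Since $\pi$ is open (Lemma \ref{open_map_lemma}), $\pi|_U:U\to\pi(U)$ is an open continuous surjection, hence a quotient map; as $p = h\circ(\pi|_U)$ is continuous, $h$ is continuous. The map $p$ is open (a coordinate projection composed with the diffeomorphism $\varphi$), and for open $V\subseteq\pi(U)$ one checks $V = \pi\big(\pi^{-1}(V)\cap U\big)$, so $h(V) = p\big(\pi^{-1}(V)\cap U\big)$ is open; thus $h$ is open. Injectivity: if $h(L_1) = h(L_2) = c$ then the slice $(-\epsilon,\epsilon)^n\times\{c\}$ lies in both $\varphi(U\cap L_1)$ and $\varphi(U\cap L_2)$, so $U\cap L_1$ and $U\cap L_2$ meet, forcing $L_1 = L_2$ since distinct leaves are disjoint. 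Surjectivity onto $I$: for $c\in I$ the slice $\varphi^{-1}((-\epsilon,\epsilon)^n\times\{c\})$ is an integral manifold of the plane field, hence sits inside a unique leaf $L_c\in\mathcal{F}_U$ with $h(L_c) = c$. A continuous open bijection is a homeomorphism, giving $h:\pi(U)\to I$. Combined with openness of $\pi(U)\subseteq\mathcal{F}$ and compactness and connectedness of $\mathcal{F}$ (noted earlier), this realizes $\mathcal{F}$ locally as a $1$-manifold with boundary, with boundary points occurring precisely where $x\in\partial M$.

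For the final statement, put $H = h\circ\pi$ on the saturated open set $\pi^{-1}(\pi(U))$; it is constant on each leaf by construction. On $U$ the identity above gives $H = p$, so $H$ is smooth there with no critical points. For a point $y\in\pi^{-1}(\pi(U))\setminus U$, invoke transitivity through Lemma \ref{growing_along_leaves}: $y\in F(U)$ for some $F\in\text{Diff}_{\mathcal{F}}(M)$, and since $F$ carries the leaf through $F^{-1}(y)$ onto itself we get $\pi(y) = \pi(F^{-1}(y))$, hence $H = p\circ F^{-1}$ on the open neighbourhood $F(U)$ of $y$. This is a diffeomorphism followed by a submersion, so $H$ is smooth with no critical points near $y$; as $y$ is arbitrary, $H$ is a smooth submersion on all of $\pi^{-1}(\pi(U))$.

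I do not anticipate a genuine obstacle: all the substance --- discreteness and then singleton-ness of $\mathcal{C}_L$, openness of $\pi$, and the saturation formula $\pi^{-1}(\pi(A)) = \bigcup_F F(A)$ --- is already in hand. The only point requiring care is purely organizational: keeping the chart domain $U$ distinct from its saturation $\pi^{-1}(\pi(U))$, and noting that the various local formulas ($p$ on $U$, $p\circ F^{-1}$ on $F(U)$) automatically agree because they are all local expressions for the single well-defined function $h\circ\pi$, so no compatibility check is needed.
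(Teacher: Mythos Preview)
Your proposal is correct and follows essentially the same approach as the paper: both proofs hinge on the identity $h\circ(\pi|_U)=\varphi^{n+1}$, deduce continuity and openness of $h$ from the corresponding properties of $\varphi^{n+1}$ and $\pi$, prove bijectivity by the same slice argument, and extend smoothness of $H$ to the saturation by precomposing with leaf-preserving diffeomorphisms from Lemma~\ref{growing_along_leaves}. The only cosmetic differences are that you invoke the quotient-map universal property for continuity of $h$ (the paper checks preimages directly) and you write $H=p\circ F^{-1}$ on $F(U)$ where the paper writes $H=\varphi^{n+1}\circ F$ on $F^{-1}(U)$; these are the same computation up to relabelling $F\leftrightarrow F^{-1}$.
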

\begin{proof}
By Lemma \ref{singleton_lemma}, given $L\in\mathcal{F}_U$ the set $\mathcal{C}_L\subset I$, defined such that $\varphi(U\cap L) = (-\epsilon,\epsilon)^n\times\mathcal{C}_L$, is a singleton. There is therefore a unique function $h:\pi(U)\rightarrow I$ such that $\mathcal{C}_L = \{h(L)\}$ for each $L\in\mathcal{F}_U$. Pulling back $h:\pi(U)\rightarrow I$ along $\pi\mid\pi^{-1}(\pi(U)):\pi^{-1}(\pi(U))\rightarrow \pi(U)$ defines an associated function $H:\pi^{-1}(\pi(U))\rightarrow I:x\mapsto h(\pi(x))$ on $\pi^{-1}(\pi(U))\subset M$. 

It is clear that $H$ is constant along each leaf $L\in\mathcal{F}_U$. Observe that the $(n+1)^{\text{st}}$ component of the diffeomorphism $\varphi:U\rightarrow (-\epsilon,\epsilon)^n\times I$ is related to $H$ according to $\varphi^{n+1}(x) =  h(\pi(x)) = H(x).$
This shows $H\mid U = \varphi^{n+1}$. For $y_0\in\pi^{-1}(\pi(U))$, there is a leaf-preserving diffeomorphism $F:M\rightarrow M$ such that $F^{-1}(U)$ contains $y_0$. So for all $y\in F^{-1}(U)$ we have $H(y) = H(F(y)) = \varphi^{n+1}(F(y))$. This shows that $H:\pi^{-1}(\pi(U))\rightarrow I$ is smooth. Since $F$ is a diffeomorphism and $\varphi^{n+1}$ is a submersion, it also shows that $H$ has no critical points.

We claim the function $h:\pi(U)\rightarrow I$ is continuous. Let $J\subset I$ be open. The preimage $\pi^{-1}(h^{-1}(J)) = H^{-1}(J)$ is open in $\pi^{-1}(\pi(U))$, and therefore $M$, because $H$ is continuous. By the definition of the quotient topology on $\mathcal{F}$, this means the set $h^{-1}(J)$ is open in $\mathcal{F}$. So $h$ is continuous, as claimed. 

We claim $h:\pi(U)\rightarrow I$ is also an open map. Let $V\subset \pi(U)$ be open in $\pi(U)$. Since $\pi$ is open by Lemma \ref{open_map_lemma}, $V\subset \mathcal{F}$ is the intersection of an open set in $\mathcal{F}$ with the open set $\pi(U)\subset\mathcal{F}$. Thus $V$ is open in $\mathcal{F}$ and $\pi^{-1}(V)$ is open in $M$. Since $\varphi^{n+1}$ is an open map, it follows that $\varphi^{n+1}(U\cap\pi^{-1}(V))\subset I$ is open in $I$. But since $H\mid U = \varphi^{n+1}$,
\begin{align*}
\varphi^{n+1}(U\cap\pi^{-1}(V)) = H(U\cap\pi^{-1}(V)) = h(\pi(U\cap\pi^{-1}(V))) = h(\pi(\pi^{-1}(V))) = h(V).
\end{align*}
Note we have used the fact that every leaf in $\pi^{-1}(V)$ meets $U$. It follows that $h$ is open, as claimed.

Finally, we claim $h:\pi(U)\rightarrow I$ is bijective, which shows $h$ is a homeomorphism since every continuous open bijection is a homeomorphism. For surjectivity, we simply note that $h(\pi(U)) =H(U) = \varphi^{n+1}(U) = I $. For injectivity, suppose $L_1,L_2\in\pi(U)$ and $h(L_1) = h(L_2)$. Then $\varphi(L_1\cap U) = \varphi(L_2\cap U)$. But since $L_i$ is the maximal integral manifold that contains $\varphi(L_i\cap U)$, this implies $L_1 = L_2$. Thus, $h$ is bijective, as claimed.
\end{proof}

We may now establish our central claim by showing that $\mathcal{F}$ is in fact a Hasudorff, second-countable, smooth $1$-manifold with boundary.
\begin{lemma}
The leaf space $\mathcal{F}$ is Hausdorff.
\end{lemma}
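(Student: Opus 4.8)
The plan is to prove that any two distinct leaves $L_1\neq L_2$ of $\mathcal{F}$ admit disjoint saturated open neighbourhoods in $M$; pushing these down through $\pi$ then separates the corresponding points of $\mathcal{F}$.

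The first step is to promote the purely point-set adapted-chart data to a genuine product neighbourhood of each leaf. Lemma \ref{singleton_lemma} says every leaf meets every $\mathcal{F}$-adapted chart in at most one slice, and this forces each leaf $L$ to be two-sided with trivial holonomy: a one-sided leaf would have nearby leaves that are connected double covers of it, meeting a small adapted chart in two slices, and a non-trivial holonomy germ $\gamma$ along a loop in $L$ would produce a leaf meeting a small adapted chart in the infinitely many heights $\gamma^{n}(t)$ — both contradicting Lemma \ref{singleton_lemma}. Granting this, the local Reeb stability theorem applies to each compact leaf $L$ and yields a saturated open neighbourhood $V_{L}\subset M$, foliated-diffeomorphic to $L\times J_{L}$ with leaves the slices $L\times\{t\}$, where $J_{L}=(-1,1)$ if $L\cap\partial M=\emptyset$ and $J_{L}=[0,1)$ if $L$ is a boundary component. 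Since $V_{L}$ is open and saturated and $\pi$ is open by Lemma \ref{open_map_lemma}, $\pi(V_{L})$ is an open subset of $\mathcal{F}$ and $\pi$ restricts to a homeomorphism $\pi(V_{L})\cong J_{L}$; in particular $\pi(V_{L})$ is a Hausdorff open neighbourhood of $L$ in $\mathcal{F}$.

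Next I would split into cases for distinct $L_1,L_2$. If $L_2\cap V_{L_1}\neq\emptyset$ then $L_2\subset V_{L_1}$ by saturation, so $L_1$ and $L_2$ are distinct points of the Hausdorff space $\pi(V_{L_1})$ and can be separated there by sets that are then open in $\mathcal{F}$; the symmetric argument handles $L_1\cap V_{L_2}\neq\emptyset$. The remaining case is $L_2\cap V_{L_1}=\emptyset=L_1\cap V_{L_2}$. Let $t_i$ denote the $J_{L_i}$-coordinate on $V_{L_i}$ and set $K_i=\{x\in V_{L_i}:|t_i(x)|\le 1/2\}$, a compact saturated neighbourhood of $L_i=\{t_i=0\}$. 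Suppose, for contradiction, that $L_1$ and $L_2$ cannot be separated. Since $\mathcal{F}$ is first countable — each point has a neighbourhood homeomorphic to an interval — there is a sequence of leaves $\Lambda_k$ with $\Lambda_k\to L_1$ and $\Lambda_k\to L_2$ in $\mathcal{F}$. Transporting $\Lambda_k\to L_i$ through the homeomorphism $\pi(V_{L_i})\cong J_{L_i}$ shows that for large $k$ one has $\Lambda_k\subset V_{L_i}$ with $t_i$-coordinate tending to $0$, hence $\Lambda_k\subset K_1\cap K_2$ once $k$ is large. Pick $x_k\in\Lambda_k$; by compactness of $K_1\cap K_2$ a subsequence converges, $x_k\to x_\infty\in K_1\cap K_2$. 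Then $t_1(x_\infty)=\lim_k t_1(x_k)=0$, so $x_\infty\in L_1$, and likewise $x_\infty\in L_2$; thus $x_\infty\in L_1\cap L_2$, contradicting that distinct leaves of a foliation are disjoint. Hence $L_1$ and $L_2$ are separable and $\mathcal{F}$ is Hausdorff.

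I expect the main obstacle to be the first step: verifying that the compact leaves here are two-sided with trivial holonomy and hence that Reeb's local stability theorem gives an honest product neighbourhood $V_L$. Once a compact saturated product core $K_i$ around each leaf is available, the separation argument is soft, using only openness of $\pi$, first countability of $\mathcal{F}$, and compactness of $M$. The subsequent lemmas would then deduce second countability of $\mathcal{F}$ from that of $M$ together with openness of $\pi$, and assemble the charts $(\pi(V_L),t_L)$ into the smooth compact connected $1$-manifold-with-boundary structure, completing the proof of Theorem \ref{thm_compact_leaves}.
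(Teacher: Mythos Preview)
Your argument is correct, but it follows a different route from the paper's. The paper does not invoke Reeb stability; instead it builds the product neighbourhood directly from tools already on the table: given a leaf $L$ and any open $U\supset L$, it takes an adapted chart at a point of $L$, uses the globally leaf-constant submersion $H$ from Lemma~\ref{homeo_lemma}, and flows $L$ along $N=\nabla H/\|\nabla H\|^2$ (Flowout/Boundary Flowout Theorem plus compactness of $L$) to obtain a saturated open $\pi^{-1}(\mathcal U)\subset U$. The separation step is then a one-liner: disjoint compact leaves $L_1,L_2$ admit disjoint opens $U_1,U_2$ in $M$ by normality, and the claim shrinks these to disjoint saturated neighbourhoods, hence disjoint $\mathcal U_1,\mathcal U_2$ in $\mathcal F$. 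Your approach instead reads trivial holonomy (and two-sidedness, which in fact follows from trivial holonomy) off Lemma~\ref{singleton_lemma} and then appeals to Reeb local stability as a black box; your separation argument replaces the normality-plus-shrinking step by a case split and a sequential compactness contradiction. Both are valid, but the paper's version is more self-contained---it reuses $H$ rather than importing Reeb stability---and its Hausdorff step is shorter, since once one can shrink any open neighbourhood of a leaf to a saturated one, no case analysis or sequences are needed. Your route has the merit of making the underlying foliation-theoretic mechanism (trivial holonomy $\Rightarrow$ local product structure) explicit.
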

\begin{proof}
We first claim the following. Let $L\in\mathcal{F}$ and $U$ an open neighborhood of $L$ in $M$. There exists an open neighborhood $\mathcal{U}$ of $L$ in $\mathcal{F}$ such that $\pi^{-1}(\mathcal{U})\subset U$.

For $x\in L$ there is an $\mathcal{F}$-adapted chart $(U_0,\varphi)$ containing $x$ such that $U_0\subset U$. By Lemma \ref{homeo_lemma}, there is a smooth function $H:\pi^{-1}(\pi(U_0))\rightarrow \mathbb{R}$ that is constant along leaves and that has no critical points. Now consider the open neighborhood of $L$ given by $V = U\cap \pi^{-1}(\pi(U_0))$. Fix a Riemannian metric $g$ on $V$ and consider the vector field $N = \nabla H/\| \nabla H\|^2$. By the Flowout Theorem and the Boundary Flowout Theorem [\cite{Lee_2003}, Thm 9.20 and Thm 9.24] and the compactness of $L$, there is a $\delta >0$ and an embedding $\Sigma:L\times I\rightarrow V$, where $I = [0,\delta)$ if $x\in\partial M$ and $I = (-\delta,\delta)$ otherwise, such that for each $y\in L$ the curve $\Sigma_y:[0,\delta)\rightarrow V:t\mapsto \Sigma(y,t)$ is an $N$-integral curve. Since $dH(N) = 1$, we have $(H\circ\Sigma_y)(t) = H(x) + t$. Since $L$ is compact and connected, this implies the images $\Sigma(L\times\{c\})$ for $c\in I$ are leaves of $\mathcal{F}$. The open subset $\Sigma(L\times I)$ is therefore a union of leaves contained in $V$, and thus $U$. That is, $\Sigma(L\times I) = \pi^{-1}(\mathcal{U})$ for some open neighborhood $\mathcal{U}$ of $L$ in $\mathcal{F}$ with $\pi^{-1}(\mathcal{U})\subset U$, as claimed.

Now suppose $L_1,L_2\in\mathcal{F}$ are distinct leaves. Then $L_1\cap L_2 = \emptyset$. By leaf compactness, there are disjoint open sets $U_1,U_2\subset M$ containing $L_1,L_2$, respectively. Applying the above claim, for each $i\in\{1,2\}$, there is an open neighborhood $\mathcal{U}_i$ of $L_i$ in $\mathcal{F}$ such that $\pi^{-1}(\mathcal{U}_i)\subset U_i$. Therefore $\pi^{-1}(\mathcal{U}_1)\cap\pi^{-1}(\mathcal{U}_2) = \emptyset$. Because $\pi$ is surjective it follows that $\mathcal{U}_1\cap\mathcal{U}_2 = \emptyset$. Thus, $\mathcal{F}$ is Hausdorff, as claimed.
\end{proof}

\begin{lemma}
The leaf space $\mathcal{F}$ is second-countable
\end{lemma}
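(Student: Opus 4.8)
The plan is to combine compactness of $M$ with the local model established in Lemma~\ref{homeo_lemma}. First I would note that $\mathcal{F}$-adapted charts (Definition~\ref{adapted_def}) exist around every point of $M$: these are just the standard flat foliation charts supplied by the Frobenius theorem, with the boundary case $I=[0,\epsilon)$ arising precisely because $\partial M$ is a union of leaves. Hence the domains $U$ of $\mathcal{F}$-adapted charts form an open cover of $M$, and since $M$ is compact there are finitely many adapted charts $(U_1,\varphi_1),\dots,(U_N,\varphi_N)$ with $M=U_1\cup\cdots\cup U_N$. Applying the surjection $\pi$ gives $\mathcal{F}=\pi(M)=\pi(U_1)\cup\cdots\cup\pi(U_N)$, and each $\pi(U_i)$ is open in $\mathcal{F}$ by Lemma~\ref{open_map_lemma}. (Alternatively one could use compactness of $\mathcal{F}$, which was noted earlier.)

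Next I would invoke Lemma~\ref{homeo_lemma} to produce, for each $i$, a homeomorphism $h_i:\pi(U_i)\to I_i$ onto an interval $I_i\subset\mathbb{R}$. Since $I_i$ is second-countable, so is the subspace $\pi(U_i)\subset\mathcal{F}$; fix a countable basis $\mathcal{B}_i$ for it. Because $\pi(U_i)$ is open in $\mathcal{F}$, every member of $\mathcal{B}_i$ is open in $\mathcal{F}$, so $\mathcal{B}:=\mathcal{B}_1\cup\cdots\cup\mathcal{B}_N$ is a countable family of open subsets of $\mathcal{F}$. The routine verification that $\mathcal{B}$ is a basis goes as follows: given an open set $W\subset\mathcal{F}$ and a leaf $L\in W$, choose $i$ with $L\in\pi(U_i)$, observe that $W\cap\pi(U_i)$ is open in $\pi(U_i)$, and pick $B\in\mathcal{B}_i$ with $L\in B\subset W\cap\pi(U_i)\subset W$. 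Thus $\mathcal{F}$ admits a countable basis.

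There is no serious obstacle here; the only points needing care are the (standard) existence of flat foliation charts, so that the cover of $M$ by adapted charts is genuine, and the elementary point-set fact that a finite union of open, second-countable subspaces is second-countable. Together with the preceding Hausdorff lemma and Lemma~\ref{homeo_lemma}, this completes the verification that $\mathcal{F}$ is a Hausdorff, second-countable, locally Euclidean space, hence a topological $1$-manifold with boundary; since $\pi$ restricted to each $U_i$ coincides with the submersion $\varphi_i^{n+1}$, the transition maps $h_j\circ h_i^{-1}$ are smooth, giving $\mathcal{F}$ the smooth structure needed to finish the proof of Theorem~\ref{thm_compact_leaves}.
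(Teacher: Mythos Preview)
Your argument is correct and follows essentially the same route as the paper: cover $\mathcal{F}$ by finitely many open sets each homeomorphic to an interval via Lemma~\ref{homeo_lemma}, then observe that a finite union of open second-countable subspaces is second-countable. The only cosmetic difference is that you extract the finite cover from compactness of $M$ and push forward by $\pi$, whereas the paper invokes compactness of $\mathcal{F}$ directly; you yourself note this alternative, and the two are equivalent.
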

\begin{proof}
For each $L\in\mathcal{F}$ there is an open neighborhood $W$ of $L$ in $\mathcal{F}$, an interval $I\subset \mathbb{R}$, and a homeomorphism $h:W\rightarrow I$. Because $I\subset \mathbb{R}$ and $\mathbb{R}$ is second-countable, $W$ is second countable. Moreover, because $\mathcal{F}$ is compact, it is covered by finitely many such subsets. Thus, $\mathcal{F}$ is second countable.
\end{proof}

\begin{lemma}\label{smooth_structure_lemma}
When equipped with the atlas given by the charts $(\pi(U),h)$ from Lemma \ref{homeo_lemma}, the leaf space $\mathcal{F}$ is smooth $1$-manifold with boundary.
\end{lemma}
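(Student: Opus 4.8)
The plan is to verify that the charts $(\pi(U),h)$ produced in Lemma~\ref{homeo_lemma} are pairwise smoothly compatible. Together with the Hausdorff and second-countability properties already established, and the fact that each such chart is a homeomorphism onto an interval $I$ (open in the interior case, half-open at boundary leaves), this exhibits $\mathcal F$ as a smooth $1$-manifold with boundary. So the only thing left to check is that the transition maps are smooth.

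Let $(U_1,\varphi_1)$ and $(U_2,\varphi_2)$ be two $\mathcal F$-adapted charts, with associated homeomorphisms $h_i:\pi(U_i)\to I_i$ and smooth submersions $H_i=h_i\circ\pi:\pi^{-1}(\pi(U_i))\to I_i$ from Lemma~\ref{homeo_lemma}, which satisfy $H_i|_{U_i}=\varphi_i^{\,n+1}$ and are constant along leaves with no critical points. Fix $L\in\pi(U_1)\cap\pi(U_2)$ and $x\in L$. The difficulty is that $U_1\cap U_2$ may be empty even though $\pi(U_1)\cap\pi(U_2)\neq\emptyset$, so one cannot simply invoke the smooth structure of $M$ on a common chart domain. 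To circumvent this I would reuse the translation trick from the proof of Lemma~\ref{homeo_lemma}: since $x\in\pi^{-1}(\pi(U_i))$, transitivity of $\mathcal F$ gives leaf-preserving diffeomorphisms $F_i:M\to M$ with $x\in F_i^{-1}(U_i)$, whence $H_i=\varphi_i^{\,n+1}\circ F_i$ on $F_i^{-1}(U_i)$. On the open neighborhood $W=F_1^{-1}(U_1)\cap F_2^{-1}(U_2)\ni x$ both $H_1$ and $H_2$ are then smooth submersions; moreover $\pi(W)$ is open in $\mathcal F$ by Lemma~\ref{open_map_lemma}, contains $L$, and is contained in $\pi(U_1)\cap\pi(U_2)$ because each $F_i$ preserves leaves.

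The key local step is the observation that $\ker dH_1=\ker dH_2$ at every point of $W$: both equal the rank-$n$ tangent distribution of $\mathcal F$, since each $H_i$ is a submersion constant along the $n$-dimensional leaves. Shrinking $W$ to a coordinate chart of $M$ in which $H_1$ is the last coordinate, $dH_2$ is then a function multiple of $dH_1$, so locally $H_2=G\circ H_1$ for a smooth $G$, with $G'$ nowhere zero because $H_2$ is itself a submersion; this argument is unaffected when $x\in\partial M$ since $H_i$ has no critical points even there. Evaluating along $W$ and using $H_i=h_i\circ\pi$ yields $h_2=G\circ h_1$ on $\pi(W)$, so the transition map $h_2\circ h_1^{-1}$ coincides with the smooth map $G$ near $h_1(L)$. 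As $L$ ranges over the overlap $\pi(U_1)\cap\pi(U_2)$, this shows $h_2\circ h_1^{-1}$ is smooth throughout its domain.

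I expect the main obstacle to be bookkeeping rather than conceptual depth: one must be careful (i) that the translation diffeomorphisms $F_i$ really do transport the identity $H_i|_{U_i}=\varphi_i^{\,n+1}$ to a full neighborhood of $x$, so that $H_1$ and $H_2$ are genuinely smooth submersions on the \emph{same} open set $W$, and (ii) that the equality $\ker dH_1=\ker dH_2$ is justified --- this uses that the tangent distribution of a foliation is uniquely determined and that a submersion constant along an $n$-dimensional leaf has that leaf as (an open subset of) one of its fibers. Once these points are pinned down, smooth compatibility of the atlas, and hence Lemma~\ref{smooth_structure_lemma}, follows at once.
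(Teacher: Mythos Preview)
Your proposal is correct and follows essentially the same strategy as the paper: use leaf-preserving diffeomorphisms to transport the adapted charts so that the smooth functions $H_i=h_i\circ\pi$ are simultaneously defined and smooth on a common open set, and then read off smoothness of $h_2\circ h_1^{-1}$ from smoothness of the $H_i$.

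The only notable difference is in the final extraction step. The paper fixes $x_0\in U_1$, takes a single leaf-preserving $F$ sending $x_0$ into $U_2$, and computes $h_2\circ h_1^{-1}$ explicitly as the composition $H\circ\varphi_1^{-1}\circ\gamma_0$, where $\gamma_0$ is the transversal coordinate line $t\mapsto(\varphi_1^1(x_0),\dots,\varphi_1^n(x_0),t)$. You instead translate both charts to a common point, observe that $\ker dH_1=\ker dH_2$ equals the leaf tangent distribution, and invoke the standard factorization $H_2=G\circ H_1$ with $G$ smooth. Both arguments are valid; the paper's is slightly more concrete (the transition map is exhibited as an explicit composition), while yours is marginally cleaner conceptually and avoids introducing the auxiliary curve. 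Neither approach has any real advantage over the other.
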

\begin{proof}
Since we have already shown that $\mathcal{F}$ is a topological $1$-manifold with boundary, we only need to show smoothness of transition maps.

Let $(U_1,\varphi_1)$, $(U_2,\varphi_2)$, denote $\mathcal{F}$-adapted charts on $M$ with $\pi(U_1)\cap \pi(U_2)\neq \emptyset$. Let $h_i:\pi(U_i)\rightarrow I_i$, $i=1,2$, denote the corresponding homeomorphisms described in Lemma \ref{homeo_lemma}. Define $h_{12}:h_1(\pi(U_1)\cap\pi(U_2))\rightarrow h_2(\pi(U_1)\cap\pi(U_2))$ according to $h_{12} = h_2\circ h_1^{-1}\mid h_1(\pi(U_1)\cap\pi(U_2))$. For each $t_0\in h_1(\pi(U_1)\cap\pi(U_2))$ we will find an open neighborhood $J_0 \ni t_0$  and a formula for $h_{12}\mid J_0$ that is manifestly smooth.

For $t_0\in h_1(\pi(U_1)\cap\pi(U_2))$ there is an $x_0\in U_1$ and a $y_0\in U_2$ with $\pi(x_0) = \pi(y_0)$ and $t_0 = h_1(\pi(x_0)) = H(x_0) = \varphi_1^{n+1}(x_0) $. Choose a leaf-preserving diffeomorphism $F:M\rightarrow M$ with $F(x_0) = y_0$. For $t\in\mathbb{R}$ let $\gamma_0(t) = (\varphi_1^{1}(x_0),\dots,\varphi_1^n(x_0),t)$. Since $x_0 = \varphi_1^{-1}(\gamma_0(t_0))\in U_1$ and $F$ is continuous, there is an open neighborhood $J_0\subset h_1(\pi(U_1)\cap\pi(U_2))$ of $t_0$ such that, for all $t\in J_0$,
\begin{align*}
\varphi_1^{-1}(\gamma_0(t))\in U_1,\quad F(\varphi_1^{-1}(\gamma_0(t)))\in U_2.
\end{align*}
The overlap $h_{12}\mid J_0$ may therefore be computed according to
\begin{align*}
h_2\circ h_1^{-1} &= h_2\circ h_1^{-1}\circ\varphi_1^{n+1}\circ\varphi_1^{-1}\circ\gamma_0\\
& = h_2\circ h_1^{-1}\circ H\circ\varphi_1^{-1}\circ\gamma_0\\
& = h_2\circ h_1^{-1}\circ h_1\circ\pi\circ\varphi_1^{-1}\circ\gamma_0\\
& = h_2\circ \pi\circ\varphi_1^{-1}\circ\gamma_0\\
& = H\circ\varphi_1^{-1}\circ\gamma_0,
\end{align*}
which is the composition of smooth functions.
\end{proof}

\begin{lemma}
With the smooth structure given by Lemma \ref{smooth_structure_lemma}, the map $\pi:M\rightarrow\mathcal{F}$ is a smooth submersion.
\end{lemma}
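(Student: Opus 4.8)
The plan is to read off the coordinate representation of $\pi$ in the charts already constructed and observe that it is the standard projection. First I would fix $x_0\in M$ and invoke the local normal form for a codimension-one foliation to obtain an $\mathcal{F}$-adapted chart $(U,\varphi)$ containing $x_0$, with $\varphi(U) = (-\epsilon,\epsilon)^n\times I$ as in Definition \ref{adapted_def}. Such charts exist around every point of $M$, including points of $\partial M$, precisely because $\partial M$ is a union of leaves in the situation arising from Theorem \ref{global_admissibility_thm}. By Lemma \ref{homeo_lemma}, $(\pi(U),h)$ is then a smooth chart on $\mathcal{F}$ around $\pi(x_0)$, and that Lemma also records the key identity $H\mid U = \varphi^{n+1}$, where $H = h\circ\pi$ on $\pi^{-1}(\pi(U))$.

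Next I would compute the local representative $h\circ\pi\circ\varphi^{-1}$ of $\pi$ on $\varphi(U)$. Using $h\circ\pi = H$ on $\pi^{-1}(\pi(U))\supset U$, followed by $H\mid U = \varphi^{n+1}$, this representative equals $\varphi^{n+1}\circ\varphi^{-1}$, i.e. the projection $(-\epsilon,\epsilon)^n\times I\to I$ sending $(y^1,\dots,y^n,t)\mapsto t$. This map is manifestly smooth, so $\pi$ is smooth near $x_0$; since $x_0$ is arbitrary, $\pi$ is smooth on $M$. Moreover the differential of this projection is surjective at every point of its domain, so $\pi$ is a submersion.

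I expect essentially no obstacle here: all the substantive work was already carried out in Lemma \ref{homeo_lemma} (smoothness of $H$, absence of critical points, and agreement of $H$ with $\varphi^{n+1}$ on $U$) and in Lemma \ref{smooth_structure_lemma} (that the charts $(\pi(U),h)$ form a smooth atlas, so that the computation above is chart-independent and genuinely exhibits smoothness of $\pi$ as a map of smooth manifolds). The only point deserving a sentence of care is the assertion that $\mathcal{F}$-adapted charts cover all of $M$, boundary included, which is the standard local description of a codimension-one foliation of a compact manifold with boundary whose boundary components are leaves.
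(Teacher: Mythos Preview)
Your proposal is correct and matches the paper's own proof essentially line for line: pick an $\mathcal{F}$-adapted chart $(U,\varphi)$ at $x_0$, invoke Lemma~\ref{homeo_lemma} to identify $h\circ\pi\mid U = H\mid U$ with the last coordinate $\varphi^{n+1}$, and conclude that $\pi$ is a smooth submersion because its coordinate representative is the standard projection. The paper states this slightly more tersely (it cites directly that $H\mid U$ is a smooth submersion by Lemma~\ref{homeo_lemma}), but the content is identical.
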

\begin{proof}
For $x_0\in M$ choose an $\mathcal{F}$-adapted chart $(U,\varphi)$ at $x_0$ and let $h:\pi(U)\rightarrow \mathbb{R}$ denote the corresponding chart on $\mathcal{F}$. Since $h\circ\pi\mid U = H\mid U: U\rightarrow\mathbb{R}$ is a smooth submersion by Lemma \ref{homeo_lemma} and $h$ is a chart on $\mathcal{F}$ it follows that $\pi$ is a smooth submersion at $x_0$.
\end{proof}

\end{document}